\begin{document}

\newtheorem{theorem}{Theorem}
\newtheorem{lemma}{Lemma}
\newtheorem{definition}{Definition}
\newtheorem{problem}{Problem}
\newtheorem{proposition}{Proposition}
\newtheorem{corollary}{Corollary}
\newtheorem{example}{Example}

\newcommand{\resp}[1]{\noindent{\color{cyan} #1}}

\newcommand{\gm}[1]{[[\emph{\color{red}GM: #1}]]}

\newcommand{\vect}[1]{\mathbf{#1}}
\newcommand{\E}{\mathbb{E}}

\def\x{\vect{x}}
\def\estx{\vect{\hat x}}
\def\ests{\vect{\hat s}}
\def\W{\vect{W}}
\def\z{\vect{z}}
\def\Z{\vect{Z}}

\def\db{\x}
\def\nbrs{nbrs}
\def\dbdom{\mathbb{Z}_{\geq 0}^n}  
\def\allbuckets{\mathcal{B}}
\def\allhist{\mathcal{P}}
\def\bmean{\frac{b_i(\x)}{|b_i|}}  
\def\unexp{{expand}}

\def\hcost{pcost}
\def\bcost{bcost}

\def\real{\mathbb{R}}
\def\posreal{\mathbb{R}_{\geq 0}}
\def\sens{\Delta}

\newcommand{\Lone}[1]{\left\Vert #1  \right\Vert_1}
\newcommand{\set}[1]{\{#1\}}   

\def\myvert{\;\vert\;}

\def\B{B}  


\def\ones{\vect{1}}
\def\zeros{\vect{0}}
\def\plus{{\!+}}
\def\b{\vect{b}}  
\def\a{\vect{a}}
\def\y{\vect{y}}
\def\q{\vect{q}}  
\def\w{\vect{w}} 
\def\v{\vect{v}}  
\def\estw{\vect{\hat w}}
\def\estq{\vect{\hat q}}
\def\A{\vect{A}}
\def\T{\vect{T}}
\def\Q{\vect{Q}}
\def\M{\vect{M}}
\def\D{\vect{D}}
\def\P{\vect{P}}
\def\p{\vect{p}}
\def\I{\vect{I}}
\def\V{\vect{V}}
\def\H{\vect{H}}
\def\G{\vect{G}}
\def\R{\vect{R}}
\def\X{\vect{X}}
\def\Y{\vect{Y}}
\def\s{\vect{s}}
\def\tq{\hat{q}}
\def\tW{\hat{W}}
\def\tWW{\mathbf{\tW}}
\def\lambdaB{\vect{\lambda}}
\def\LambdaB{\vect{\Lambda}}

\def\esty{\vect{\hat y}}
\def\m{\vect{m}}

\def\dom{\mathbb{N}^n}
\def\rng{\mathcal{Y}}
\def\obj{y}

\def\uni{\mathcal{U}}
\def\coll{\mathcal{S}}
\def\packing{\mathcal{C}}
\def\Lap{\mbox{Laplace}}

\def\dom{\mathcal{D}}
\def\reg{reg}

\def\cover{H}
\def\allparts{\mathcal{P}(\allbuckets)}

\def\DWnospace{\text{DAWA}}
\def\DW{\text{DAWA} }
\def\DWall{\text{DAWA-all} }
\def\DWapprox{\text{DAWA-subset} }
\def\alg{\mathcal{K}}  
\def\LM{\mathcal{L}}	
\def\GM{\mathcal{G}}	
\def\MM{\mathcal{M}}	

\newcommand\debullet[2]{#2}

\title{A Data- and Workload-Aware Algorithm for Range Queries Under Differential Privacy}
\numberofauthors{2}
\author{
	Chao Li$^\dagger$, Michael Hay$^\ddagger$, Gerome Miklau$^\dagger$, Yue Wang$^\dagger$
	\and
	\alignauthor
	       \affaddr{$\dagger$University of Massachusetts Amherst\\}
	       \affaddr{School of Computer Science}\\
	       \affaddr{ \{chaoli,miklau,yuewang\}@cs.umass.edu}
	\alignauthor
	       \affaddr{$\ddagger$Colgate University\\}
	       \affaddr{Department of Computer Science}\\
	       \affaddr{mhay@colgate.edu}
}

\maketitle
\pagestyle{empty}
\begin{abstract}
We describe a new algorithm for answering a given set of range queries under $\epsilon$-differential privacy which often achieves substantially lower error than competing methods.  Our algorithm satisfies differential privacy by adding noise that is adapted to the input data {\em and} to the given query set.  We first privately learn a partitioning of the domain into buckets that suit the input data well.  Then we privately estimate counts for each bucket, doing so in a manner well-suited for the given query set.  Since the performance of the algorithm depends on the input database, we evaluate it on a wide range of real datasets, showing that we can achieve the benefits of data-dependence on both ``easy'' and ``hard'' databases.



\end{abstract}

%
%


\section{Introduction}
\label{sec:intro}

Differential privacy \cite{dwork2006calibrating,dwork2011a-firm} has received growing attention in the research community because it offers both an intuitively appealing and mathematically precise guarantee of privacy. In this paper we study batch (or non-interactive) query answering of range queries under $\epsilon$-differential privacy.  The batch of queries, which we call the {\em workload}, is given as input and the goal of research in this area is to devise differentially private mechanisms that offer the lowest error for any fixed setting of $\epsilon$.  The particular emphasis of this work is to achieve high accuracy for a wide range of possible input databases.

Existing approaches for batch query answering broadly fall into two categories: {\em data-independent} mechanisms and {\em data-dependent} mechanisms.  Data-independent mechanisms achieve the privacy condition by adding noise that is independent of the input database.  The Laplace mechanism is an example of a data-independent mechanism.  Regardless of the input database, the same Laplacian noise distribution is used to answer a query.  More advanced data-indep\-endent mechanisms exploit properties of the workload to achieve greater accuracy, but the noise distribution (and therefore the error) remains fixed for {\em all} input databases.

Data-dependent mechanisms add noise that is customized to properties of the input database, producing different error rates on different input databases.  In some cases, this can result in significantly lower error than data-independent approaches.  These mechanisms typically need to use a portion of the privacy budget to learn about the data or the quality of a current estimate of the data.  They then use the remaining privacy budget to privately answer the desired queries.  In most cases, these approaches do not exploit workload. 

A comparison of state-of-the-art mechanisms in each category reveals that each has advantages, depending on the ``hardness'' of the input database.  If the database is viewed as a histogram, data-bases with large uniform regions can be exploited by these algorithms, allowing the data-dependent mechanisms to outperform data-independent ones.  But on more complex datasets, e.g. those with many regions of density, data-dependent mechanisms break down.

Consider as an example a workload of random range queries and a dataset derived from an IP-level network trace.  A state-of-the-art data-dep\-endent mechanism, {\em Multiplicative Weights and Exponential Mechanism}~(MWEM) \cite{hardt2012a-simple}, can achieve 60.12 average per-query error when $\epsilon=0.1$.  For the same $\epsilon$, one of the best data-independent mechanisms for this workload, {\em Privelet}~\cite{xiao2010differential}, offers per-query error of 196.6, a factor of 3.27 worse.   But other datasets have properties that are difficult to exploit.  On a dataset based on the HEP-PH citation network, MWEM has average per-query error of 722.3 with $\epsilon=0.1$, while the error of the data-independent mechanism is still 196.6 for this workload, a factor of 3.67 better. 

Such a large variation in the relative performance of mechanisms across data sets is a major limitation of current approaches.  This is especially true because it is typically necessary to select a mechanism without seeing the data.
\begin{figure*}[t]
\centering
\subfigure[{\small True database $\x$}]{
\includegraphics[width=.27\textwidth]{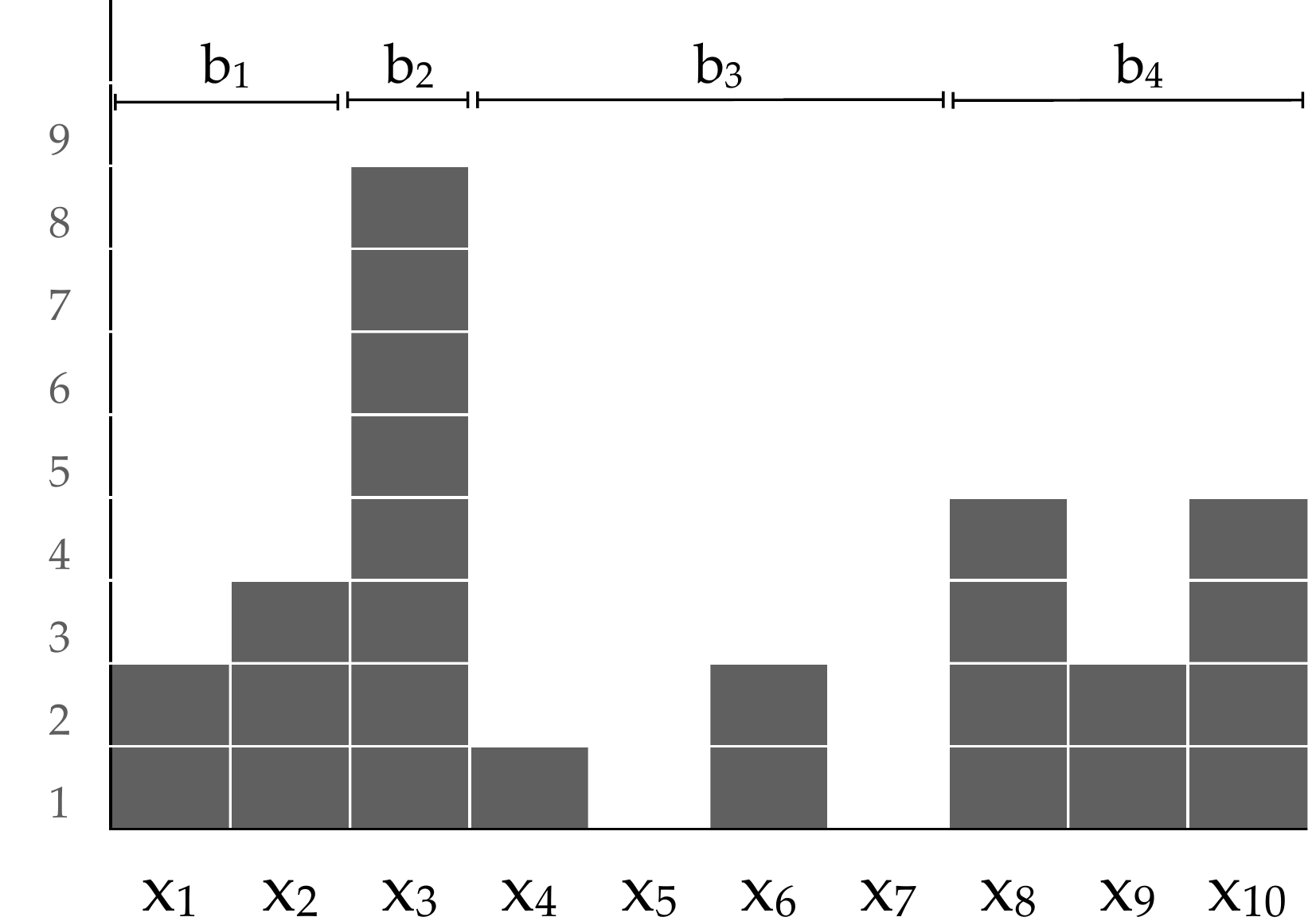}
\quad}
\subfigure[Algorithm flow chart]{
\includegraphics[width=.33\textwidth]{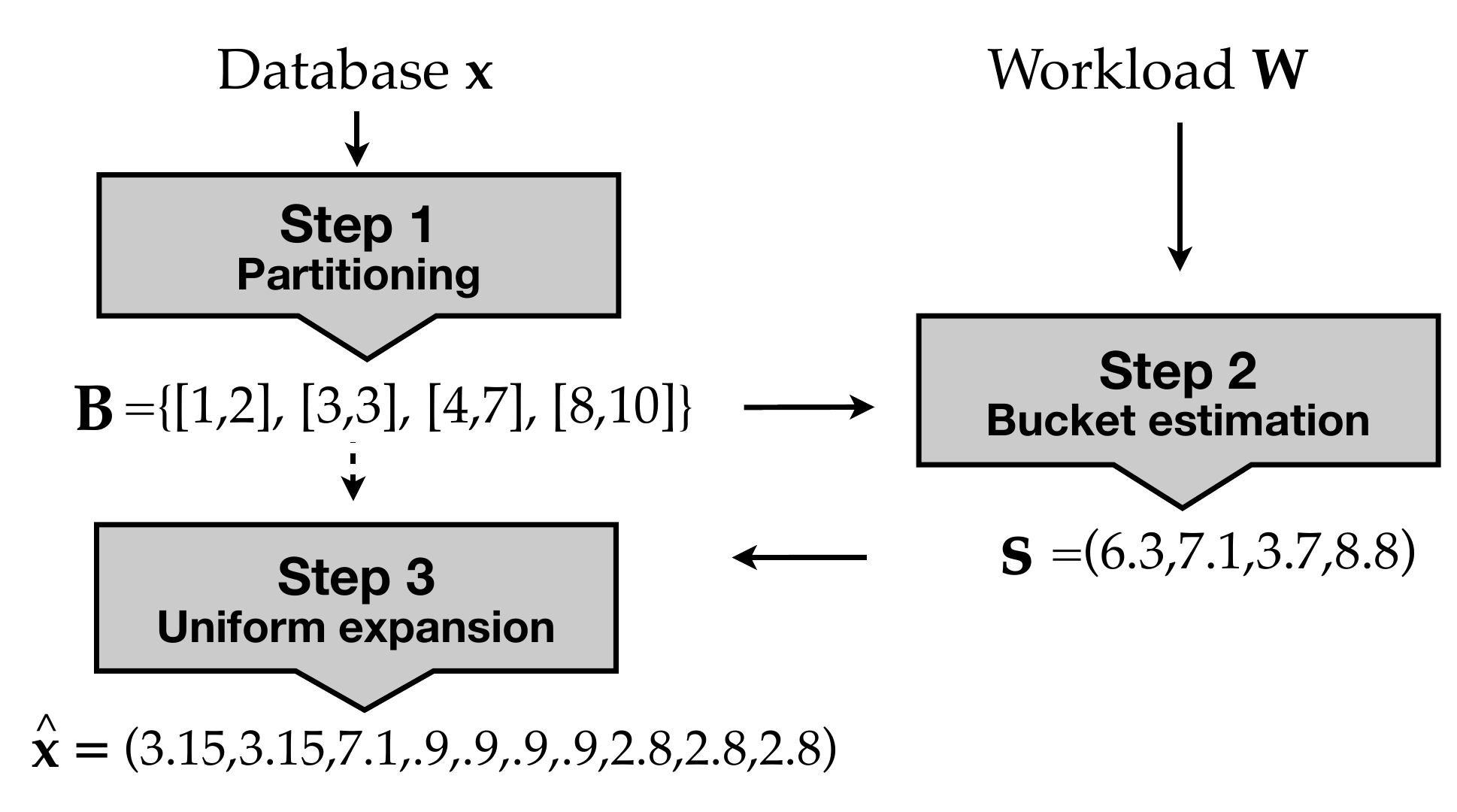}
}
\subfigure[Private output $\estx$]{\quad
\includegraphics[width=.27\textwidth]{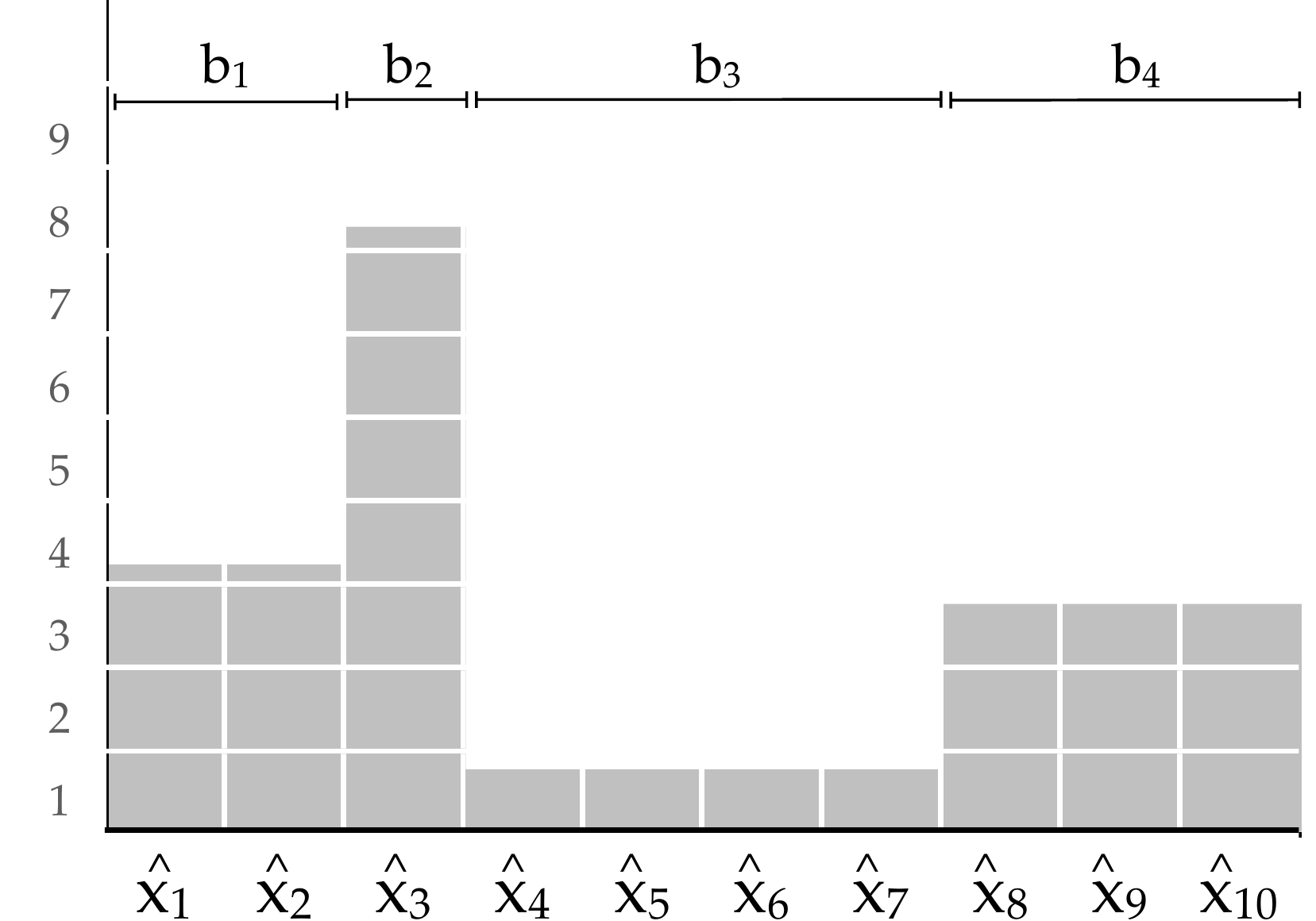}
}
\caption{\label{fig:example} Overview and example execution for the \DW mechanism.}
\end{figure*}

\debullet{\subsubsection*{Contributions}
\begin{itemize} \itemsep -0.01in
\item We propose a novel 2-stage mechanism for answering range queries under $\epsilon$-differential privacy. On inputs where existing data-dependent mechanisms do well, our mechanism achieves lower error by a factor up to $6.86$ compared with the state of the art.  On inputs where existing data-dependent mechanisms do poorly, our mechanism achieves error comparable to state-of-art data-independent mechanisms.

\item We present an efficient algorithm in the first stage that partitions the domain into uniform regions. Compared with other differentially private partitioning algorithms, our algorithm generates much better partitions and runs in time that is only quasilinear in the size of the domain.

\item We design a new, efficient algorithm in the second stage that computes scaling factors for a hierarchical set of range queries. Unlike existing hierarchical strategies, our method allows a non-uniform budget distribution across queries at the same level, which leads to a strategy that is more finely tuned to the workload, and thus more accurate. 
\end{itemize}
}{\paragraph*{Contributions} First, we propose a novel 2-stage mechanism for answering range queries under $\epsilon$-differential privacy. On inputs where existing data-dependent mechanisms do well, our mechanism achieves lower error by a factor of up to $6.86$ compared with the state of the art.  On inputs where existing data-dependent mechanisms do poorly, our mechanism achieves error comparable to state-of-art data-independent mechanisms.  Second, we present an efficient algorithm in the first stage that partitions the domain into uniform regions. Compared with other differentially private partitioning algorithms, our algorithm generates much better partitions and runs in time that is only quasilinear in the size of the domain.  Third, we design a new, efficient algorithm in the second stage that computes scaling factors for a hierarchical set of range queries. Unlike existing hierarchical strategies, our method allows a non-uniform budget distribution across queries at the same level, which leads to a strategy that is more finely tuned to the workload, and thus more accurate.}

To our knowledge, our mechanism is the first data-aware mechanism that provides significant improvement on databases with easy-to-exploit properties yet does not break-down on databases with complex distributions. 

\subsection*{Algorithm Overview}
We give an overview to our new mechanism and an example below.

The {\em Data-Aware/Workload-Aware} (\DWnospace) mechanism is an $\epsilon$-diff\-erentially-private algorithm that takes as input a workload of range queries, $\W$, and a database, $\x$, represented as a vector of counts.  The output is an estimate $\estx$ of $\x$, where  the noise added to achieve privacy is adapted to the input data and to the workload.  
The \DW  algorithm consists of the following three steps, the first two of which require private interactions with the database.  To ensure that the overall algorithm satisfies $\epsilon$-differential privacy, we split the total $\epsilon$ budget into $\epsilon_1$, $\epsilon_2$ such that $\epsilon_1 + \epsilon_2 = \epsilon$ and use these two portions of the budget on the respective stages of the algorithm.
\subsubsection*{Step 1: Private Partitioning}

The first step selects a partition of the domain that fits the input database.  We describe (in Sec.~\ref{sec:partition}) a novel differentially private algorithm that uses $\epsilon_1$ budget to select a partition such that within each partition bucket, the dataset is approximately \emph{uniform}.  This notion of uniformity is later formalized as a cost function but the basic intuition is that if a region is uniform, then there is no benefit in using a limited privacy budget to ask queries at a finer granularity than these regions---the signal is too small to overcome the noise.  The output of this step is $\B$, a partition of $\x$ into $k$ buckets, {\em without} counts for the buckets.

\subsubsection*{Step 2: Private Bucket Count Estimation}

Given the partition $\B$, the second step derives noisy estimates of the bucket counts.  Rather than simply adding Laplace noise to the bucket counts, we use a workload-aware method.  Conceptually, we re-express the workload over the new domain defined by the partition $\B$, with the buckets in the partition taking the place of $\x$.  Then we have a well-studied problem of selecting unbiased measurements (i.e. linear functions of the bucket counts) in a manner that is optimized for the workload.  This problem has received considerable attention in past work \cite{Li:2010Optimizing-Linear,Ding11Differentially,Cormode11Differentially,chaopvldb12,Yuan12Low-Rank,Yaroslavtsev13Accurate}.  We use the basic framework of the matrix mechanism \cite{Li:2010Optimizing-Linear}, but we propose a new algorithm (described in Sec.~\ref{sec:stats}) for efficiently approximating the optimal measurements for the workload.  

Given the selected measurements, we then use the $\epsilon_2$ privacy budget and Laplace noise to privately answer the measurement \\queries, followed by least-squares inference to derive the output of this step, a  noisy estimate $\s$ for the buckets in $\B$.  

\subsubsection*{Step 3: Uniform Expansion}
In the last step we derive an estimate for the $n$ components of $\x$ from the $k$ components of the histogram $(\B,\s)$.  This is done by assuming uniformity: the count $s_i$ for each bucket $b_i$ is spread uniformly amongst each position of $\x$ that is contained in $b_i$.  The result is the estimate $\estx$ for $\x$.  Strictly speaking, any range query can be computed from $\estx$, but the noise is tuned to provide accuracy for precisely the queries in the workload.

The following example illustrates a sample execution of \DW\hspace{-2pt}.

\begin{example}
For $n=10$, Fig.~\ref{fig:example} shows graphically a sample data vector $\x=(2,3,8,1,0,2,0,4,2,4)$.  A possible output of Step 1 is $\B=\set{b_1,b_2,b_3,b_4}$ where $b_1=[1,2]$, $b_2=[3,3]$, $b_3=[4,7]$, and $b_4=[8,10]$. This need not be the optimal partition, as defined in Sec.~\ref{sec:partition}, because the partition selection is randomized. For the sample database $\x$ in the figure, the true bucket counts for the partition would be $(5,8,3,10)$.  
The result from Step 2 is a set of noisy bucket counts, $\s=(6.3, 7.1,3.6,8.4)$.  
Step 3 then constructs $\estx$ by assuming a uniform distribution for values within each bucket.  As it is shown graphically in Fig.~\ref{fig:example}(c), the final output is $$\estx=(3.15,3.15,7.1,.9,.9,.9,.9,2.8,2.8,2.8).$$
\end{example}

%
%
%

The novelty of our approach consists of splitting the overall private estimation problem into two phases: Step 1, which is data-dependent, and Step 2, which is workload-aware.  Our main technical contributions are an effective and efficient private solution to the optimization problem underlying Step 1, and an effective and efficient solution to the optimization problem underlying Step 2.  We also extend our methods to two-dimensional workloads using spatial decomposition techniques.

A number of recently-proposed methods \cite{Acs2012compression,Xiao:2012fk,Cormode11Differentially,xu2013differential} share commonalities with one or more parts of our mechanism (as described in Sec.~\ref{sec:related}).  But each omits or simplifies an important step and/or they use sub-optimal methods for solving related subproblems.  In Sec.~\ref{sec:experiments}, an extensive experimental evaluation shows that for workloads of 1- and 2-dimensional range queries, the \DW algorithm achieves lower error than all competitors on nearly every database and setting of $\epsilon$ tested, often by a significant margin.  

The paper is organized as follows. We review notation and privacy definitions in Sec.~\ref{sec:background}. The partitioning algorithm is presented in Sec.~\ref{sec:partition}, and the bucket count estimating algorithm is included in Sec.~\ref{sec:stats}. We extensively compare \DW with state-of-the-art competing mechanisms in Sec.~\ref{sec:experiments}. Related work is discussed in Sec.~\ref{sec:related}. We conclude and mention future directions in Sec.~\ref{sec:conclusion}.

%

\section{Background}\label{sec:background}

In this section we review notation, basic privacy definitions, and standard privacy mechanisms used throughout the paper.

\subsection{Databases and Queries}

The query workloads we consider consist of counting queries over a single relation.  Let the database $I$ be an instance of a single-relation schema $R(\mathbb{A})$, with attributes $\mathbb{A}=\{A_1, A_2, \ldots, A_k\}$ each having an ordered domain.  In order to express our queries, we first transform the instance $I$ into a {\em data vector} $\x$ consisting of $n$ non-negative integral counts.  We restrict our attention to the one- or two-dimensional case.  In one dimension, we isolate a single attribute, $A_i$, and define $\x$ to consist of one coefficient for each element in the domain, $dom(A_i)$.  In other words, $x_j$ reports the number of tuples in database instance $I$ that take on the $j^{th}$ value in the ordered domain of $A_i$.  In the two-dimensional case, for attributes $A_i,A_j$, $\x$ contains a count for each element in $dom(A_i)\times dom(A_j)$.  For simplicity, we describe our methods in the one-dimensional case, extending to two-dimensions in Sec. \ref{sec:exptspatial}.

A query workload $\W$ defined on $\x$ is a set of range queries $\set{w_1 \dots w_m}$ where each $w_i$ is described by an interval $[j_1,j_2]$ for $1 \leq j_1 \leq j_2 \leq n$.  The evaluation of $w_i=[j_1, j_2]$ on $\x$ is written $w_i(\x)$ and defined as $\sum_{j=j_1}^{j_2}{x_j}$.  We use $\W(\x)$ to denote the vector of all workload query answers $\langle w_1(\x) \dots w_m(\x) \rangle$.

A histogram on $\x$ is a partition of $[1,n]$ into non-overlapping intervals, called buckets, along with a summary statistic for each bucket.  We denote a histogram by $(\B,\s)$ with $\B$ a set of buckets $\B=\set{b_1 \dots b_k}$ and $\s$ a set of corresponding statistics $\s = s_1 \dots s_k$.  Each $b_i$ is described by an interval $[j_1,j_2]$ and the set of intervals covers $[1,n]$ and all intervals are disjoint.  We define the length $|b_i|$ of bucket $b_i$ to be $j_2-j_1+1$.

We associate a summary statistic with each of the $k$ buckets in a histogram. One way to do this is to treat the bucket intervals as range queries and evaluate them on $\x$.  We denote this true statistic for bucket $b_i$ by $b_i(\x)$ and we use $\B(\x)$ to denote the vector for true bucket counts.  In other cases, the summary statistics are noisy estimates of $\B(\x)$, denoted $\s = {s}_1 \dots {s}_k$.

Throughout the paper we use the {\em uniform expansion} of a histogram $(\B, \s)$.  It is a data vector of length $n$ derived from $\B$ by assuming uniformity for counts that fall within bucket ranges.  
\begin{definition}[Uniform Expansion] Let $\unexp$ be a\\ function that takes a histogram $H=(\B,\s)$ with buckets $\B=\set{b_1 \dots b_k}$ and statistics $\s=s_1 \dots s_k$, and uniformly expands it.  Thus, $\unexp(\B, \s)$ is an $n$-length vector $\y$ defined as:
$$y_j = \frac{s_{t(j)} }{ |b_{t(j)}|}$$
where $t(j)$ is the function that maps position $j$ to the index of the unique bucket in $\B$ that contains position $j$ for $j \in [1,n]$.
\end{definition}
In our algorithms, both the choice of a histogram and the value of the histogram statistics have the potential to leak sensitive information about $\x$.  Both must be computed by a differentially private algorithm.  Suppose that a differentially private algorithm returns histogram $H = (\B, \s)$ where the statistics have noise added for privacy.  We use $\estx$ to denote the uniform expansion of $H$, i.e.,  $\estx = \unexp(\B, \s)$.  Since the vector $\estx$ is a differentially private estimate for $\x$, we can use it to answer any query $w$ as $w(\estx)$.

We are interested in how accurately $\estx$ approximates $\x$.  The \emph{absolute error} of $\estx$ is defined as $\Lone{\x - \estx }$.  
The \emph{expected absolute error} is $\E\Lone{\x - \estx }$ where the expectation is taken over the randomness of $\estx$.
Given workload $\W$, the \emph{average error} on $\W$ is $\frac{1}{m} \Lone{\W(\x) - \W(\estx)}$.


\subsection{Private Mechanisms}

Differential privacy places a bound (controlled by $\epsilon$) on the difference in the probability of algorithm outputs for any two {\em neighboring} databases.  For database instance $I$, let $\nbrs(I)$ denote the set of databases differing from $I$ in at most one record; i.e., if $I' \in \nbrs(I)$, then $|(I - I') \cup (I' - I)| = 1$.

\begin{definition}[Differential Privacy~\cite{dwork2006calibrating}] A randomized algorithm $\alg$ is $\epsilon$-differentially private if for any instance $I$, any $I' \in \nbrs(I)$, and any subset of outputs $S \subseteq Range(\alg)$, the following holds:
\[
Pr[ \alg(I) \in S] \leq \exp(\epsilon) \times Pr[ \alg(I') \in S]
\]		
\end{definition}
Differential privacy has two important composition properties~\cite{mcsherry2009pinq}.  Consider $k$ algorithms $\alg_1, \dots, \alg_k$, each satisfying $\epsilon_i$-differential privacy.  The \emph{sequential} execution of $\alg_1, \dots, \alg_k$ satisfies $(\sum \epsilon_i)$-differential privacy.  Suppose the domain is partitioned into $k$ arbitrary disjoint subsets and $\alg_i$ is executed on the subset of data from the $i^{th}$ partition.  The \emph{parallel} execution of $\alg_1, \dots, \alg_k$ satisfies $(\max_i \set{\epsilon_i})$-differential privacy.

For functions that produce numerical outputs, differential privacy can be satisfied by adding appropriately scaled random noise to the output.  The scale of the noise depends on the function's \emph{sensitivity}, which captures the maximum difference in answers between any two neighboring databases. 
\begin{definition}[Sensitivity]
    Given function $f:$\\ $dom(A_1) \times \dots \times dom(A_k) \rightarrow \real^d$, the sensitivity of $f$, denoted $\sens f$, is defined as: 
\[
\sens f = \max_{I, I' \in \nbrs(I)} \Lone{ f(I) - f(I') }
\]
\end{definition}
Sensitivity extends naturally to a function $g$ that operates on data vector $\x$ by simply considering the composition of $g$ with the function that transforms instance $I$ to vector $\x$.  In this paper, we consider functions that take additional inputs from some public domain $\mathcal{R}$.  For such functions, $\sens f$ measures the largest change over all pairs of neighboring databases and all $r \in \mathcal{R}$. 

The Laplace mechanism achieves differential privacy by adding Laplace noise to a function's output.  We use $\Lap(\sigma)$ to denote the Laplace probability distribution with mean 0 and scale $\sigma$.
\begin{definition}[Laplace Mechanism~\cite{dwork2006calibrating}]
Given function $f: dom(A_1) \times \dots \times dom(A_k) \rightarrow \real^d$, let $\z$ be a $d$-length vector of random variables where $z_i \sim \Lap(\sens f / \epsilon)$.  The Laplace mechanism $\LM$ is defined as $\LM(I) = f(I) +  \z$.
\end{definition}

\section{Private Partitioning} \label{sec:partition}

This section describes the first stage of the \DW algorithm.  The output of this stage is a partition $\B$.  In Sec.~\ref{sec:partitioncost}, we motivate the problem of finding a good partition and argue that the quality of a partition depends on the data.  We then describe a differentially private algorithm for finding a good partition in Sec.~\ref{sec:privatepartition}.

This stage of \DW is not tuned to the workload of queries and instead tries to select buckets such that, after statistics have been computed for the buckets and the histogram is uniformly expanded, the resulting $\estx$ is as close to $\x$ as possible.

\subsection{Cost of a partition} \label{sec:partitioncost}

Recall that after the partition $\B=\set{b_1 \dots b_k}$ has been selected, corresponding statistics $s_1, \dots, s_k$ are computed.  Let ${s}_i = b_i(\x) + Z_i$ where $Z_i$ is a random variable representing the noise added to ensure privacy.  (This noise is added in the second stage of \DWnospace.)  Once computed, the statistics are uniformly expanded into $\estx = \unexp(\B, s)$, which is an estimate for $\x$.  If bucket $b_i$ spans the interval $[j_1, j_2]$ we use $j \in b_i$ to denote $j \in [j_1, j_2]$. After applying uniform expansion, the resulting estimate for $x_j$, for $j \in b_i$, is:
\begin{align}
\hat{x}_j &= \bmean + \frac{Z_i}{|b_i|}  \label{eqn:xi} 
\end{align}
The accuracy of the estimate depends on two factors.  The first factor is the bucket size.  Since the scale of $Z_i$ is fixed, larger buckets have less noise per individual $\hat{x}_j$.  The second factor is the degree of uniformity within the bucket.  Uniform buckets, where each $x_j$ is near the mean of the bucket $\bmean$, yield more accurate estimates.  


We can translate these observations about $\hat{x}_j$ into a bound on the expected error of $\estx$.  For bucket $b_i$, let $dev$ be a function that measures the amount the bucket \emph{deviates} from being perfectly uniform: 
\begin{equation}
dev(\x, b_i) = \sum_{j \in b_i} \left|x_j - \bmean\right|    
\end{equation}
The bound on the expected error of $\estx$ is in terms of the deviation and the error due to added noise.  
\begin{proposition}
\label{prop:histerror}
Given histogram $H=(\B, \s)$ where $|\B| = k$ and for $i=1 \dots k$, ${s}_i = b_i(\x) + Z_i$ where $Z_i$ is a random variable.  The uniform expansion, $\estx = \unexp(\B, \s)$, has expected error
\begin{equation} \label{eqn:experror}
    \E \Lone{\estx - \x} \leq \sum_{i=1}^k dev(\x, b_i) + \sum_{i=1}^k \E |Z_i|
\end{equation}
\end{proposition}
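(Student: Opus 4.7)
The plan is to decompose the $\ell_1$ error position by position, then regroup the contributions bucket by bucket, and finally take expectations. Concretely, I would start from
\[
\Lone{\estx - \x} = \sum_{i=1}^{k} \sum_{j \in b_i} |\hat{x}_j - x_j|,
\]
which just uses the fact that $\B$ partitions $[1,n]$.

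Next, I would substitute the explicit formula for $\hat{x}_j$ from equation (\ref{eqn:xi}). For each $j \in b_i$ we have
\[
\hat{x}_j - x_j = \left(\bmean - x_j\right) + \frac{Z_i}{|b_i|},
\]
so by the triangle inequality
\[
|\hat{x}_j - x_j| \leq \left|x_j - \bmean\right| + \frac{|Z_i|}{|b_i|}.
\]
Summing over $j \in b_i$, the first term yields exactly $dev(\x, b_i)$ by definition, and the second term yields $|b_i| \cdot \frac{|Z_i|}{|b_i|} = |Z_i|$ since there are $|b_i|$ positions in the bucket and the noise $Z_i$ does not depend on $j$.

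Summing over all buckets $i = 1, \dots, k$ gives the pathwise bound
\[
\Lone{\estx - \x} \leq \sum_{i=1}^{k} dev(\x, b_i) + \sum_{i=1}^{k} |Z_i|.
\]
Taking expectations of both sides and using linearity of expectation (noting that $dev(\x, b_i)$ is deterministic given $\x$ and $\B$) yields the claimed inequality. There is no real obstacle here; the only subtlety worth flagging is that equation (\ref{eqn:xi}) requires $\B$ to be fixed before the noise is added, which matches the assumption that $\s$ is produced in Stage 2 after $\B$ has been chosen in Stage 1.
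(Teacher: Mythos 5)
Your proof is correct and follows exactly the route the paper indicates: it sketches the argument as following from the per-position formula for $\hat{x}_j$ in (\ref{eqn:xi}) together with the triangle inequality, which is precisely the decomposition you carry out in full. No gaps; your regrouping of the $|b_i|$ copies of $|Z_i|/|b_i|$ into $|Z_i|$ per bucket is the only computation needed and it is done correctly.
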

The proof of this bound follows from (\ref{eqn:xi}) and the fact that $|a + b| \leq |a| + |b|$.  Proof of a similar result is given in Acs et al.~\cite{Acs2012compression}.

Prop.~\ref{prop:histerror} reveals that the expected error of a histogram can be decomposed into two components: (a) \emph{approximation error} due to approximating each $x_j$ in the interval by the mean value $\bmean$ and (b) \emph{perturbation error} due to the addition of random noise.  The perturbation component is in terms of random variables $Z_i$, which are not fully determined until the second stage of \DWnospace.  For the moment, let us make the simplifying assumption that the second stage uses the Laplace mechanism (with a budget of $\epsilon_2$).  Under this assumption, $Z_i \sim \Lap(1/\epsilon_2)$ and 
$\sum_{i=1}^k \E |Z_i|$ simplifies to $k/\epsilon_2$.  This error bound conforms with our earlier intuition that we want a histogram with fewer (and therefore larger) buckets that are as uniform as possible.  The optimal choice depends on the uniformity of the dataset $\x$ and on the budget allocated to the second stage (because smaller $\epsilon_2$ increases perturbation error, making less uniform buckets relatively more tolerable).


We use Prop.~\ref{prop:histerror} as the basis for a cost function.
\begin{definition}[Cost of partition] \label{def:l1cost}
Given a partition of the domain into buckets $\B = \set{b_1, \dots, b_k}$, the cost of $\B$ is
\begin{equation}
    \hcost(\x, \B) = \sum_{i=1}^k dev(\x, b_i) \;+\; k / \epsilon_2 
\end{equation}
\end{definition}
This cost function is based on the simplifying assumption that $Z_i \sim \Lap(1/\epsilon_2)$.  In fact, in the \DW algorithm, each $Z_i$ is a weight\-ed combination of Laplace random variables.  The weights, which are tuned to the workload, are not selected until the second stage of \DWnospace.  However, any weight selection has the property that $\E |Z_i| \geq 1/\epsilon_2$.  This means our choice of cost function is conservative in the sense that it favors a more fine-grained partition than would be selected with full knowledge of the noise distribution. 

\begin{example}
Recall the partition $\B=\set{b_1,b_2,b_3,b_4}$ in Fig.~\ref{fig:example}.

\begin{itemize}\itemsep 0in
\item $b_1=[1,2]$, $\frac{b_1(\x)}{|b_1|}=\frac{5}{2}$, $dev(\x,b_1)=\frac{1}{2} + \frac{1}{2} = 1$
\item $b_2=[3,3]$, $\frac{b_2(\x)}{|b_2|}=\frac{8}{1}$, $dev(\x,b_2)=0$
\item $b_3=[4,7]$, $\frac{b_3(\x)}{|b_3|}=\frac{3}{4}$, $dev(\x,b_3)=\frac{1}{4}+\frac{3}{4}+\frac{5}{4}+\frac{3}{4}=3$
\item $b_4=[8,10]$, $\frac{b_4(\x)}{|b_4|}=\frac{10}{3}$, $dev(\x,b_4)= \frac{2}{3} + \frac{4}{3} + \frac{2}{3}=2\frac{2}{3} $
\end{itemize}
Therefore, $pcost(\x,B) = 6\frac{2}{3} + 4/\epsilon_2$.
When $\epsilon_2=1.0$, $pcost(\x,B)= 6\frac{2}{3} + 4 = 10\frac{2}{3}$.
In comparison, the cost of partitioning $\x$ as a single bucket $[1,10]$ leads to a deviation of $17.2$ and total $pcost$ of $18.2$.  Thus $B$ is a lower cost partition and intuitively it captures the structure of $\x$ which has four regions of roughly uniform density.  But note that with a more stringent privacy budget of $\epsilon_2=0.1$, the perturbation error per bucket rises so $pcost(\x, B) = 6\frac{2}{3} + 40 = 46\frac{2}{3}$ whereas the $pcost$ of a single bucket is only $17.2 + 10 = 27.2$.
\end{example} 

Given this cost function, we can now formally state the problem that the first stage of \DW aims to solve.
\begin{problem}[Least Cost Partition Problem] \label{problem:optpartition} 
The least cost partition problem is to find the partition that minimizes the following objective:
    \begin{equation*} 
     \begin{aligned}
     & \underset{\B \subseteq \allbuckets}{\text{minimize}}
     & &  \hcost(\db, \B) \\
     & \text{subject to}
     & & 
      \bigcup_{b \in \B} b = [1,n], \text{ and }
      \forall \; b,b' \in \B, b \cap b' = \emptyset
     \end{aligned}
    \end{equation*}
where $\allbuckets$ is the set of all possible intervals $\allbuckets = \set{ [i, j] \myvert 1 \leq i \leq j \leq n }$ and the constraint ensures that $\B$ partitions [1,n].  
\end{problem}
The next section describes our algorithm for solving this optimization problem in a differentially private manner.

\subsection{Finding a least cost partition} \label{sec:privatepartition}

Since partition cost is data-dependent, we cannot solve Problem~\ref{problem:optpartition} exactly without violating privacy.  Instead, we must introduce sufficient randomness to ensure differential privacy.  
%
%
Our approach is efficient and simple; our main contribution is in showing that this simple approach is in fact differentially private.  

Our approach is based on the observation that the cost of a partition decomposes into a cost per bucket.  Let $\bcost$ be a function that measures the cost of an individual bucket $b$,
    $$\bcost(\x, b) = dev(\x, b) + 1/\epsilon_2.$$
For any partition $\B$, the partition cost is simply the sum of the bucket costs: $\hcost(\x, \B) = \sum_{b \in \B} \bcost(\x, b)$.  
Since one needs to interact with the private database in computing the cost of each bucket, reporting the partition with the least cost will violate differential privacy. Instead, we solve Problem~\ref{problem:optpartition} using \emph{noisy} cost: the noisy cost of a bucket comes from perturbing its bucket cost with a random variable sampled from the Laplace distribution, and the noisy partition cost is the sum of the noisy bucket costs.
    
The algorithm for this stage is shown in Algorithm~\ref{alg:privpartintervals}.  It takes as input the private database $\x$ as well as $\epsilon_1$ and $\epsilon_2$.  The parameter $\epsilon_1$ represents the privacy budget allocated to this stage.  The parameter $\epsilon_2$ represents the privacy budget allocated to the second stage (Algorithm~\ref{alg:greedyhier}, Sec.~\ref{sec:stats}).  That parameter is needed here because the value of $\epsilon_2$ is used in calculating the bucket costs.

Algorithm~\ref{alg:privpartintervals} has three simple steps.  First, it calls the subroutine {\sc AllCosts} to efficiently compute the cost for all possible buckets (details are below).  
Second, it adds noise to each bucket cost.  Finally, it calls the {\sc LeastCostPartition} subroutine to find the partition with the least \emph{noisy} cost.  This is done using dynamic programming, much like classical algorithms for v-optimal histograms~\cite{Jagadish:1998:OHQ:645924.671191}.  

\begin{algorithm}
\small
\caption{Private partition for intervals and $L_1$ cost function}
\label{alg:privpartintervals}
\begin{algorithmic}
    \Procedure{Private Partition}{$\db$, $\epsilon_1$, $\epsilon_2$}
    \State // Let $\allbuckets$ be the set of all intervals on $[1,n]$
    \State // Compute cost $\bcost(\x, b)$ for all $b \in \allbuckets$
    \State $cost \gets \Call{AllCosts}{\db, \epsilon_2}$ 
    \State // Add noise to each bucket cost
    \For {$b \in \allbuckets$}
        \State $cost[b] \gets cost[b] + Z$, where $Z \sim \Lap(2 \sens \bcost/\epsilon_1)$ 
    \EndFor
    \State // Find $\B$ with lowest total cost based on noisy bucket costs
    \State // stored in $cost$
    \State $\B \gets \Call{LeastCostPartition}{\allbuckets, cost}$
    \State \Return $\B$
    \EndProcedure
\end{algorithmic}
\end{algorithm}

We analyze Algorithm~\ref{alg:privpartintervals} along three key dimensions: accuracy, computational efficiency, and privacy.  

\paragraph*{Accuracy} Accuracy is measured in terms of the difference in cost between the selected partition and the optimal choice (ignoring privacy).
We give the following bound on the algorithm's accuracy.  
\begin{restatable}{theorem}{thmutilpartition}
    \label{thm:utility_partition}
With probability at least $1 - \delta$, Algorithm~\ref{alg:privpartintervals} returns a solution with cost at most $OPT + t$ where $OPT$ is the cost of the least cost solution and $t = 4 \sens c \; n \log (|\allbuckets|/\delta) / \epsilon_1$.
\end{restatable}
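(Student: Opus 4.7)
The plan is to compare the true partition cost of the returned partition $\B$ against $OPT$ via an intermediate reasoning about noisy costs. Let $\sigma = 2\sens\bcost/\epsilon_1$ denote the scale of the Laplace noise added to each bucket cost. First, I would invoke the standard tail bound for the Laplace distribution: for $Z \sim \Lap(\sigma)$, $\Pr[\,|Z| > \sigma \ln(1/\delta')\,] \le \delta'$. Setting $\delta' = \delta/|\allbuckets|$ and applying a union bound over all intervals $b \in \allbuckets$, I obtain that with probability at least $1-\delta$ the event
\[
\mathcal{E} := \bigl\{\, |Z_b| \le \tau \text{ for every } b \in \allbuckets\,\bigr\}, \qquad \tau := \sigma \ln(|\allbuckets|/\delta),
\]
holds, where $Z_b$ is the noise drawn for bucket $b$.

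Next, I would lift this per-bucket guarantee to a per-partition guarantee. For any feasible partition $\B'$, the noisy cost is $\widetilde{\hcost}(\x,\B') = \hcost(\x,\B') + \sum_{b \in \B'} Z_b$. Since buckets are disjoint intervals covering $[1,n]$, any partition has $|\B'| \le n$ buckets, so on $\mathcal{E}$,
\[
\bigl|\widetilde{\hcost}(\x,\B') - \hcost(\x,\B')\bigr| \;\le\; n\tau.
\]
This bound holds simultaneously for every feasible partition on the event $\mathcal{E}$.

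Now I apply the usual optimality sandwich. Let $\B^*$ be an optimal (true-cost) partition and $\B$ the partition returned by \textsc{LeastCostPartition}, which exactly minimizes the noisy cost via dynamic programming (using the additive decomposition $\widetilde{\hcost}(\x,\B') = \sum_{b \in \B'} \widetilde{\bcost}(\x,b)$, the standard segmentation DP finds the global minimum). Then on $\mathcal{E}$,
\[
\hcost(\x,\B) \;\le\; \widetilde{\hcost}(\x,\B) + n\tau \;\le\; \widetilde{\hcost}(\x,\B^*) + n\tau \;\le\; \hcost(\x,\B^*) + 2n\tau \;=\; OPT + 2n\tau.
\]
Substituting $\tau = (2\sens\bcost/\epsilon_1)\ln(|\allbuckets|/\delta)$ yields $OPT + 4 \sens\bcost \cdot n \ln(|\allbuckets|/\delta)/\epsilon_1$, which is exactly $OPT + t$.

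The only genuinely delicate point is the union bound: we need the additive slack in $\tau$ to depend on $\log|\allbuckets|$ rather than $|\allbuckets|$, and this works precisely because the per-bucket Laplace tail decays exponentially. Everything else, in particular the optimality of the DP, the additive decomposition of partition cost, and the fact that $|\B| \le n$ for any valid partition, is purely structural and uses no randomness; once the high-probability event $\mathcal{E}$ is in hand the rest of the argument is deterministic.
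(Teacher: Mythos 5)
Your proposal is correct and follows essentially the same route as the paper's proof: a Laplace tail bound with a union bound over all of $\allbuckets$ to get the high-probability event, the observation that any valid partition has at most $n$ buckets so its noisy cost deviates from its true cost by at most $n\tau$, and then the standard optimality sandwich (which the paper phrases as ``no histogram of true cost exceeding $OPT+t$ can beat the optimum after perturbation''). The constants work out identically, so there is nothing further to reconcile.
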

In addition to a theoretical analysis, we do an extensive empirical evaluation in Sec.~\ref{sec:experiments}.

%

\paragraph*{Efficiency} 


The computationally challenging part is {\sc AllCosts}, which computes the cost for each bucket.   
%
Unlike the bucket cost for a v-optimal histogram (which is based on an $L_2$ metric, rather than the $L_1$ metric used here), the cost does not decompose easily into sum and sum of square terms that can be precomputed.  Nevertheless, we show that we can decompose the the cost into partial sums of $x_j$ which can be computed using a balanced tree.

Given bucket $b_i$, 
let us identify the indexes $j \in b_i$ that are above the bucket mean, $\bmean$.  Let $I^+ = \left\{ j \myvert j \in b_i \text{ and } x_j\geq \bmean \right\}$.  Let $I^{-}$ be those below the mean, $I^{-} = b_i - I^{+}$.  We can simplify $dev(\x, b_i)$ as follows:
\begin{align*}
dev(\x, b_i)
&=\sum_{j\in I^{+}}\left(x_j - \bmean \right)+\sum_{j\in I^{-}}\left(\bmean  - x_j \right)\\
&=2\sum_{j\in I^{+}}\left(x_j - \bmean \right)\\
&=2 \sum_{j\in I^{+}}x_j -|I^{+}|\cdot \bmean
\end{align*}
The second equality follows from the fact that the sum of deviations above the mean must be equal to the sum of deviations below the mean.
The above equation implies that the total deviation can be computed knowing only the sum of $x_j$ for $j \in I^{+}$ and the size of $I^{+}$.  Those quantities can be efficiently computed using a binary search tree of $x_{j_1}, \ldots, x_{j_2}$. Each node in the tree stores a value (some $x_j$) as well as the sum of all values in its subtree, and the number of nodes in its subtree.  For any constant $a$, we can then compute $\sum_{j \in b_i, x_j \geq a} (x_j - a)$ via binary search.  

To compute the bucket costs for all intervals with length $\ell$, we can dynamically update the search tree. After the cost for interval $[j,j+\ell]$ has been computed, we can update the tree to compute interval $[j+1, j+\ell+1]$ by removing $x_j$ from the tree and adding $x_{j+\ell+1}$.  Using a self-balancing tree, computing all intervals of size $\ell$ requires $O(n \log n)$ time.  To compute \emph{all} intervals, the total runtime is $O(n^2 \log n)$. 

We can reduce the runtime to $O(n \log^2 n)$ by restricting to intervals whose length is a power of two.
This restriction has the potential to exclude the optimal solution.  Empirically, we find that Algorithm~\ref{alg:privpartintervals} remains almost as accurate as when it uses all intervals, and is always more accurate than competing techniques (Sec.~\ref{sec:exptpartitions}).  The benefit of the approximation is reduced runtime, which makes it feasible to run on larger datasets.

The last step of Algorithm~\ref{alg:privpartintervals}, {\sc LeastCostPartition}, is efficient, requiring time linear in $n$ and the number of buckets.


%
%

\paragraph*{Privacy}
The proof of privacy is the main challenge.  Analyzing the privacy requires some subtlety because the noise by itself is not necessarily enough to guarantee privacy.  (If the Laplace mechanism was used to publish noisy costs for {\em all} buckets, the scale of the noise would be $\Omega(n)$.)  However, when the actual noisy costs are kept secret and the only published output is the partition with the least (noisy) cost, then a small amount of noise is sufficient to ensure privacy.  The noise is proportional to the sensitivity of the bucket cost. It can be shown $\sens \bcost \leq 2$. 
\begin{theorem}
\label{thm:privacy}
Algorithm~\ref{alg:privpartintervals} is $\epsilon_1$-differentially private.
\end{theorem}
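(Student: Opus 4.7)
The plan is to avoid treating Algorithm~\ref{alg:privpartintervals} as a vanilla Laplace mechanism that releases the entire noisy cost vector---which would require noise proportional to the number of buckets whose cost can shift, yielding a vacuous bound---and instead to exploit the fact that only a single partition is published. Fix neighboring $\x,\x'$ differing only at position $j$. A bucket $b$ has $\bcost(\x,b)\neq \bcost(\x',b)$ only when $j\in b$; call such buckets \emph{$j$-buckets}. Every partition of $[1,n]$ contains exactly one $j$-bucket, and this combinatorial fact drives the reduction below.

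The first step is a decomposition of the objective. For each $j$-bucket $b^*=[\ell^*,r^*]$, let
\[
S^*(b^*) \;=\; \min_{\B_c}\sum_{b\in\B_c}\bigl(\bcost(\x,b)+Z_b\bigr),
\]
where the inner minimum ranges over partitions $\B_c$ of the complement $[1,\ell^*-1]\cup[r^*+1,n]$; every bucket in such a partition is a non-$j$-bucket. Since any partition $\B'$ of $[1,n]$ decomposes uniquely into its $j$-bucket and a partition of the complement,
\[
\min_{\B'}\sum_{b\in\B'}\bigl(\bcost(\x,b)+Z_b\bigr) \;=\; \min_{b^*:j\in b^*}\bigl[\bcost(\x,b^*)+Z_{b^*}+S^*(b^*)\bigr].
\]
Crucially, both $S^*(b^*)$ and the minimizing partition $\pi(b^*)$ of the complement depend only on $\{Z_b:j\notin b\}$ and on $\bcost$-values at non-$j$-buckets---all invariant under $\x\to\x'$.

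Next, I would condition on the ``quiet'' noise $Z_Q:=\{Z_b:j\notin b\}$. The quantities $S^*(b^*)$ and $\pi(b^*)$ become fixed and identical under $\x,\x'$. For a target output partition $\B$ with $j$-bucket $b_\B$, the conditional probability of outputting $\B$ factors as (i) an $\x$-invariant indicator that $\pi(b_\B)=\B\setminus\{b_\B\}$, times (ii) the probability that $b_\B$ achieves the argmin of $g(b^*):=\bcost(\x,b^*)+S^*(b^*)+Z_{b^*}$ over $j$-buckets $b^*$. The $\x$-dependent part of the score, $\bcost(\x,b^*)+S^*(b^*)$, has sensitivity at most $\sens\bcost$ (only the first term varies), and the noise variables $Z_{b^*}$ are independent $\Lap(2\sens\bcost/\epsilon_1)$ across distinct $j$-buckets. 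Event (ii) is thus an instance of report-noisy-min.

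Finally, I would apply the standard report-noisy-min analysis. Conditioning further on $\{Z_{b^*}:b^*\neq b_\B\}$, the event that $b_\B$ wins becomes $Z_{b_\B}\le T$ for a threshold with $|T(\x)-T(\x')|\le 2\sens\bcost$ (at most $\sens\bcost$ from the $\min$ over $b^*\neq b_\B$, by the standard subadditivity-of-$\min$ argument, and at most $\sens\bcost$ from $\bcost(\x,b_\B)$). The Laplace CDF $G$ satisfies $|\partial_t \ln G(t)|\le 1/\sigma$ for $Z\sim\Lap(\sigma)$, so $G(T(\x))/G(T(\x'))\le \exp(2\sens\bcost/\sigma) = \exp(\epsilon_1)$ when $\sigma=2\sens\bcost/\epsilon_1$. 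Integrating this bound back over $\{Z_{b^*}:b^*\neq b_\B\}$ and $Z_Q$ preserves the ratio and yields $\epsilon_1$-differential privacy. The main obstacle is identifying the decomposition itself: a coupling that shifts $Z_b$ by $\bcost(\x,b)-\bcost(\x',b)$ for every $j$-bucket pays $\exp(\epsilon_1/2)$ per $j$-bucket and is hopeless (there can be $\Theta(n^2)$ $j$-buckets), so the argument depends fundamentally on the fact that the output commits to exactly one $j$-bucket while the rest of the noise is distributionally identical under $\x$ and $\x'$.
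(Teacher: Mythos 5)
Your argument is correct and is essentially the paper's own proof: both isolate the unique bucket of the output partition whose cost changes between neighbors, condition on all the other (invariant) noise variables, and reduce the selection event to a one-sided threshold condition on that bucket's Laplace noise whose threshold shifts by at most $2\sens\bcost$ (one $\sens\bcost$ from the competing minimum, one from the output's own cost), finishing with the Laplace CDF shift inequality. Your explicit decomposition into $j$-buckets paired with optimal complement partitions---which recasts the selection as report-noisy-min---is a cleaner packaging of the paper's split of $\allhist$ into partitions that do and do not contain the changed bucket, but the substance is identical.
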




\newcommand{\zbnd}[1]{Z(#1, \z^{-i})}   
\newcommand{\indic}[1]{\textbf{I} \left[ #1 \right]}
\def\alg{\mathcal{A}}

\begin{proof}[of Theorem~\ref{thm:privacy}]
Recall that $\allbuckets = \set{ [i, j] \myvert 1 \leq i \leq j \leq n }$ is the set of all intervals.   For convenience, we make a few small adjustments to notation.  First, we index this set: let $\allbuckets = \set{b_1, \dots, b_M}$ where $M = |\allbuckets|$.  Second, we describe a partition $B$ in terms of this indexed set, so we say $B = \set{i_1, \dots, i_k }$ to mean that $B$ consists of the intervals $b_{i_1}, \dots, b_{i_k}$.  A partition $B$ is valid if it covers the domain and its buckets are disjoint.  Let $\allhist$ be the set of all valid partitions.  Finally, we use this same indexing for the random variables that represent the added noise:  let $\Z = (Z_1, \dots, Z_M)$  where for each $i \in [1,M]$, the random variable $Z_i \sim \Lap(\lambda)$ represents the noise added to the cost of $b_i$.

Let $\db_0,\db_1$ be any pair of neighboring databases and let $B \in \allhist$ be any output of the algorithm.   It suffices to prove
\begin{align*}
P(\alg(\db_1) = B) \geq e^{-\epsilon} P(\alg(\db_0) = B) 
\end{align*}
where $\alg(\db)$ denotes Algorithm~\ref{alg:privpartintervals} running on input $\db$ and the probability distribution is over random variables $\Z$.

When run on input $\db$, the algorithm will output partition $B$ if and only if $B$ is the partition with the lowest noisy cost.  Formally, let $\z \in \real^M$ represent an assignment of $\Z$.  Partition $B$ will be selected if and only if $\z$ satisfies the following condition:
$$
\sum_{j \in B} \bcost(\db, b_j) + z_j < \min_{B' \in \allhist - \set{B}} \left\{ \sum_{k \in B'} \bcost(\db, b_k) + z_k \right\}
$$

Since neighboring databases $\db_0$ and $\db_1$ only differ by one record and the buckets of $B$ partition the domain, there must be exactly one $i\!\in\!B$ where $\bcost(\db_0,b_i)\!\neq\!\bcost(\db_1,b_i)$.  We will now derive an expression for the probability that $B$ is selected that focuses on the noisy cost of bucket $b_i$.  To do this, it will be convenient to partition the space of possible partitions into those that include bucket $b_i$ and those that do not.  Let $\allhist^+ = \set{ B \myvert B \in \allhist \text{ and } i \in B }$ and let $\allhist^- = \allhist - \allhist^+$.  $B$ will be selected if and only if (a) $B$ is the partition with least noisy cost in $\allhist^+$ and (b) $B$ has lower noisy cost than any partition in $\allhist^-$.  We examine these two conditions in turn.

For condition (a), observe that all partitions in $\allhist^+$ use bucket $b_i$, thus whether (a) holds is independent of the outcome of $Z_i$ since it has the same effect on the scores of all partitions in $\allhist^+$.  We use $\z^{-i}$ as shorthand for $(z_1, \dots, z_{i-1}, z_{i+1}, \dots, z_n)$.  Let $\phi$ be a predicate that is true if and only if the assignment of $\z^{-i}$ makes $B$ the least cost partition among $\allhist^+$, and false otherwise:  
\begin{align*}
&\phi({\db}, \z^{-i}) \\    
&= \!\!\!\sum_{j \in B - \set{i}}\!\!\! \bcost(\db, b_j) + z_j < \!\!\min_{B' \in \allhist^+ - \set{B}} \left\{ \sum_{k \in B' - \set{i}} \!\!\!\bcost(\db, b_k) + z_k \right\}
\end{align*}
Since $\db_0$ and $\db_1$ only differ in the score assigned to bucket $b_i$, $\phi(\db_0, \z^{-i}) = \phi(\db_1, \z^{-i})$ 
for all $\z^{-i} \in \real^{M-1}$.

For condition (b), let $\psi$ be a predicate that is true if and only if the assignment of $\z$ makes $B$ a lower cost partition than any partition in $\allhist^-$, and false otherwise.  A key insight is that if we fix $\z^{-i}$, then $B$ will have lower cost provided that $z_i$ is small enough.
\begin{align*}
\psi(\db, \z) &= \sum_{j \in B} \bcost(\db, b_j) + z_j < \min_{B' \in \allhist^-} \left\{ \sum_{k \in B'} \bcost(\db, b_k) + z_k \right\} \\
&= z_i < \zbnd{\db}
\end{align*}
where 
\begin{align*}
&\zbnd{\db} = \\    
& \min_{B' \in \allhist^-} \left\{ \sum_{k \in B'} \bcost(\db, b_k) + z_k \right\} - \sum_{j \in B} \bcost(\db, b_j) - \sum_{\ell \in B - \set{i}} z_\ell
\end{align*}
The upper bound $\zbnd{\db}$ depends on the database.  For neighboring databases $\db_0$ and $\db_1$,
$\zbnd{\db_1} \geq \zbnd{\db_0} - 2 \sens \bcost$. This is
because compared to the score on $\db_0$, the score on neighboring database $\db_1$ of the minimum cost partition in $\allhist^{-}$ could be lower by at most $\sens \bcost$ and the cost of $B$ could be larger by at most $\sens \bcost$.  

We can now express the probability that the algorithm on input $\db$ outputs $B$ in terms of $\phi$ and $\psi$.  Let $f_{\Z}$ (respectively $f_Z$) denote the density function for a multivariate (respectively univariate) Laplace random variable, and $\indic{ \cdot }$ denote the indicator function.
\begin{align*}
&P(\alg(\db) = B) = P( \phi(\db, \Z^{-i}) \land \psi(\db, \Z) ) \\
&= \int \indic{ \phi(\db, \z^{-i}) \land \psi(\db, \z) } f_{\Z}(\z) \mathrm{d}\z \\
&= \int \indic{ \phi(\db, \z^{-i}) } f_{\Z^{-i}}(\z^{-i}) \left( \int  \indic{  \psi(\db, \z) } f_{Z_i}(z_i) \mathrm{d}z_i \right) \mathrm{d}\z^{-i} \\
&= \int \indic{ \phi(\db, \z^{-i}) } f_{\Z^{-i}}(\z^{-i}) P(Z_i < \zbnd{\db}) \mathrm{d}\z^{-i}
\end{align*}
Since $P(Z_i < C)$ decreases with decreasing $C$, we have for neighboring databases $\db_0$ and $\db_1$ and any $\z^{-i} \in \real^{M-1}$ that
\begin{align*}
&P(Z_i < \zbnd{\db_1}) \\
&\geq P(Z_i < \zbnd{\db_0} - 2\sens \bcost) \\
&\geq e^{- 2\sens \bcost/\lambda} P(Z_i < \zbnd{\db_0} ) 
\end{align*}
where the last line follows from the fact that if $Z$ is a Laplace random variable with scale $\lambda$, then for any $z$ and any constant $c > 0$, $P(Z < z - c) \geq e^{-c/\lambda} P( Z < z )$.

In addition, we observed earlier that $\phi(\db_0, \z^{-i}) = \phi(\db_1, \z^{-i})$ 
for all $\z^{-i} \in \real^{M-1}$.  Therefore, we can express a lower bound for $P(\alg(\db_1) = H)$ strictly in terms of $\db_0$:
\begin{align*}
&P(\alg(\db_1) = B) \\
&\geq \!\!\int\!\! \indic{ \phi(\db_0, \z^{-i}) } f_{\Z^{-i}}(\z^{-i}) e^{- 2\sens \bcost/\lambda} P(Z_i < \zbnd{\db_0} ) \mathrm{d}\z^{-i} \\
&= e^{- 2\sens \bcost/\lambda} P(\alg(\db_0) = B) = e^{-\epsilon} P(\alg(\db_0) = B)
\end{align*}
since, according the algorithm description, $\lambda = 2\sens \bcost / \epsilon$.
\end{proof}


{\em Remark  \;} In Algorithm~\ref{alg:privpartintervals} we can reduce the noise from $2 \sens \bcost$ to $\sens \bcost$ plus the sensitivity of the particular bucket.  The benefit is a reduction in noise (by at most a factor of 2) for some buckets.  This optimization is used in the experiments.  

\section{\hspace*{-1.2ex}Private Bucket Count Estimation} \label{sec:stats}

This section describes the second stage of the \DW algorithm.  Given the partition $B=\{b_1, \ldots, b_k\}$ determined by the first stage of \DWnospace, it remains to privately estimate counts for each bucket, using budget $\epsilon_2$.  Thus the goal of the second stage is to produce $\s = s_1 \dots s_k$.  Naive solutions like adding Laplace noise to each bucket count result in high error for many workloads.  In this section, we show how to adapt the existing framework of the matrix mechanism \cite{Li:2010Optimizing-Linear} to create a workload-adaptive algorithm for computing the bucket counts.  Within this framework, we describe a novel greedy algorithm for minimizing error of the workload queries.

\subsection{Workload-adaptive bucket estimation}

Our approach relies on the matrix mechanism \cite{Li:2010Optimizing-Linear}, which provides a framework for answering a batch of linear queries (i.e. a workload). Instead of answering the workload directly, the matrix mechanism poses another set of queries, called the query strategy, and uses the Laplace mechanism to obtain noisy answers.  These noisy answers can then be used to derive an estimated data vector using ordinary least squares.  The answers to the workload can then be computed from the estimated data vector.  

Adapting the matrix mechanism to the private estimation of the bucket counts entails two challenges.  First, our original workload $\W$ is expressed in terms of $\x$, but we seek a mechanism that produces estimates of the bucket counts $\s$.  Below, in Sec.~\ref{sec:sub:transform}, we describe a transformation of $\W$ into a new workload in terms of the domain of buckets that allows us to optimize error for the original $\W$.  Second is the familiar challenge of the matrix mechanism: computing a query strategy, suited to the given workload but not dependent on the data, so as to minimize the mean square error of answering the workload.  In general, computing the query strategy that minimizes error under the matrix mechanism requires solving high complexity optimization problems \cite{Li:2010Optimizing-Linear, Yuan12Low-Rank}. Hence we extend ideas from prior work~\cite{chaopvldb12, Yaroslavtsev13Accurate} that efficiently compute approximately optimal query strategies.  We fix a template strategy that is well-suited for anticipated workloads, but then compute approximately optimal weighting factors to emphasize the strategy queries that matter most to the workload.  This effectively adjusts the privacy budget to maximize accuracy on the given workload.  Since our anticipated workload consists of range queries, we use a hierarchical query strategy as a template, similar to prior work~\cite{hay2010boosting, Cormode11Differentially, xiao2010differential}.

Our goal is to minimize the mean squared error of answering the workload by assigning different scaling to queries. Although similar goals are considered in prior works, their methods impose additional constraints that do not apply in our setting: Li and Miklau~\cite{chaopvldb12} require the number of strategy queries to be no more than the domain size; and Yaroslavtsev et al.~\cite{Yaroslavtsev13Accurate} require a fixed ``recovery'' matrix to derive workload answers from strategy answers.

\subsection{Workload transformation} \label{sec:sub:transform}

Recall that, given statistics $\s=s_1 \dots s_k$ for the buckets in $B$, we answer the workload $\W$ by first uniformly expanding $\s$ into an estimate $\estx$ and then computing $\W(\estx)$.  We now show an equivalent formulation for this process by transforming the $m \times n$ workload $\W$ into a new $m \times k$ workload $\tW$ such that the workload query answers can be computed directly as $\tW(\s)$.  The most important consequence of this transformation is that we need not consider the uniform expansion step while adapting the estimation of $\s$ to the workload.  Since $k<n$, an additional convenient consequence is that the domain size is typically reduced so that the complexity of the matrix mechanism operations is lower\footnote{Table \ref{tab:optimalk} in Sec.~\ref{sec:experiments} shows the domain sizes of the derived partitions for each example dataset we consider in the performance analysis.}.

\begin{definition}[Query transformation]
Given a query $q=(q_1, \ldots, q_n)$, defined on data vector $\x$, and given partition $\B$, the transformation of $q$ with respect to $\B$ is defined as $\tq=(\tq_1,\ldots, \tq_k)$ where
$$\tq_j = \frac{1}{|b_j|} \sum_{i\in b_j} q_i.$$
\end{definition}

\begin{example}
 Recall the partition $B=\{b_1, b_2, b_3, b_4\}$ in Fig.~\ref{fig:example}. Consider range query $q=x_2+x_3+x_4+x_5+x_6$. This query can be reformulated in terms of the  statistics of the three buckets $b_1$ to $b_3$ that cover the range spanned by $q$.
 Hence $\tq = \frac{1}{2}s_1 + s_2 + \frac{3}{4}s_3$.
 \end{example}
 
Accordingly, a transformed workload, $\tW$, is formed by transforming each query in $\W$.

\begin{restatable}{proposition}{propqueryconvert}\label{prop:queryconvert}
For any workload $\W$ and buckets $\B$, let $\tW$ be the transformation of $\W$.  Then for any statistics $\s$ for $\B$, 
$$\W(expand(\B,\s))=\tW(\s)$$
\end{restatable}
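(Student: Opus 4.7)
The plan is to prove the statement for a single query $q \in \W$ and then argue that the extension to the full workload is immediate since both $\W(\unexp(\B,\s))$ and $\tW(\s)$ are vectors whose entries are obtained by applying the corresponding single-query identity componentwise.

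For the single-query case, I would start by unfolding the definition of uniform expansion. Letting $\estx = \unexp(\B,\s)$, by definition each coordinate satisfies $\hat{x}_i = s_{t(i)} / |b_{t(i)}|$, where $t(i)$ is the index of the unique bucket containing position $i$. Then
\[
q(\estx) = \sum_{i=1}^n q_i \hat{x}_i = \sum_{i=1}^n q_i \cdot \frac{s_{t(i)}}{|b_{t(i)}|}.
\]
The key step is to regroup the sum over the domain $[1,n]$ into an outer sum over buckets and an inner sum over the positions in each bucket, which is valid because $\B$ is a partition of $[1,n]$:
\[
q(\estx) = \sum_{j=1}^k \sum_{i \in b_j} q_i \cdot \frac{s_j}{|b_j|} = \sum_{j=1}^k \frac{s_j}{|b_j|} \sum_{i \in b_j} q_i.
\]
Recognizing the inner factor $\frac{1}{|b_j|}\sum_{i \in b_j} q_i$ as exactly $\tq_j$ from the definition of query transformation, this collapses to $\sum_{j=1}^k \tq_j s_j = \tq(\s)$, completing the single-query identity.

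There is no real obstacle here, since the argument is just linearity together with a change in order of summation; the only thing to verify carefully is that the partition property of $\B$ (covering $[1,n]$ with disjoint intervals) is what legitimizes the regrouping step, and that the constant $s_j/|b_j|$ can be pulled outside the inner sum. To conclude, I would observe that $\W(\estx)$ is the vector whose $\ell$-th entry is $w_\ell(\estx)$ and $\tW(\s)$ is the vector whose $\ell$-th entry is $\tilde{w}_\ell(\s)$; since the two are equal for every $\ell$ by the single-query argument, the vector identity $\W(\unexp(\B,\s)) = \tW(\s)$ follows.
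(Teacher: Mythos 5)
Your proof is correct and follows essentially the same route as the paper's: unfold the uniform expansion, regroup the sum over the domain into a sum over buckets, and identify the inner factor as the transformed query coefficient $\tq_j$. The paper's version is just the single-query computation stated more tersely; your explicit remarks about the partition property justifying the regrouping and the componentwise extension to the full workload are sound but add nothing beyond what the paper's argument implicitly uses.
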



We now seek a private estimation procedure for the bucket counts in $\s$ that is adapted to the transformed workload $\tW$.  It follows 
from Prop.~\ref{prop:queryconvert} that minimizing the error for workload $\tW$ will also result in minimum error for $\W$ after expansion.

\subsection{Optimal scaling of hierarchical queries} \label{sec:budget}

As mentioned above, our template strategy, denoted as $Y$, is a set of interval queries that forms a tree with branching factor $t$.  
The queries at the leaves are individual entries of $\s$.  For each higher level of the tree, interval queries of $t$ nodes in the previous level are aggregated, and the aggregated query becomes their parent node in the upper level. Since the number of nodes at each level may not be a multiple of $t$, the last nodes at each level are allowed to have fewer than $t$ children.  This aggregating process is repeated until the topmost level only has one node, whose interval is the entire domain $\s$. 


For each query $q\in Y$, let $c_q$ be the scaling $q$, and $Y_c$ be the set of queries in $Y$ after the scaling.  The goal of this stage of our algorithm is to find a scaling of $Y_c$ to minimize the total squared error of answering $\tW$ using $Y_c$. According to \cite{Li:2010Optimizing-Linear}, scaling up all queries in $Y_c$ by a positive constant does not change the mean squared error of answering any query using $Y_c$. Thus, without loss of generality, we bound the sensitivity of $Y_c$ by 1 by requiring that,
\begin{equation}\label{eqn:epsbound}
\sum_{q(i)\neq 0, q\in Y}c_q \leq 1,\quad i=1,\ldots, k.
\end{equation}

When the sensitivity of $Y_c$ is fixed, the scaling controls the accuracy with which the query will be answered: the larger scaling leads to the more accurate answer. 
Let the matrix representation of $Y$ be $\Y$ and $\D_Y$ be the diagonal matrix whose diagonal entries are scales of queries in $Y_c$. Then the matrix form of $Y_c$ can be represented as $\D_Y\Y$. Since the sensitivity of $Y_c$ is bounded by $1$, according to \cite{Li:2010Optimizing-Linear}, the squared error of answering a query $q$ (with matrix form $\q$) using $Y_c$ under $\epsilon_2$ differential privacy is:
$$\frac{2}{\epsilon_2^2}\q^T((\D_Y\Y)^T\D_Y\Y)^{-1}\q=\frac{2}{\epsilon_2^2}\q^T(\Y^T\D_Y^2\Y)^{-1}\q.$$
Let $\tWW$ be the matrix form of $\tW$. The total squared error of answering all queries in $\W$ can then be computed as:
\begin{align}
\sum_{q\in \tW}\frac{2}{\epsilon_2^2}\q^T(\Y^T\D_Y^2\Y)^{-1}\q&=\frac{2}{\epsilon_2^2}tr(\tWW(\Y^T\D_Y^2\Y)^{-1}\tWW^T)\nonumber\\
&=\frac{2}{\epsilon_2^2}tr(\tWW^T\tWW(\Y^T\D_Y^2\Y)^{-1}).\label{eqn:mse}
\end{align}
Above, $tr()$ is the trace of a square matrix: the sum of all diagonal entries.  We now formally state the query scaling problem.

\begin{problem}[Optimal Query Scaling Problem] \label{prob:budget}
The \\optimal query scaling problem is to find a $c_q$ for each query $q\in Y$ that minimizes Eqn~(\ref{eqn:mse}) under the constraint of Eqn~(\ref{eqn:epsbound}).
\end{problem}

\subsection{Efficient greedy scaling} \label{sec:greedy_scaling}
Problem \ref{prob:budget} is a significant simplification of the general strategy selection problem from \cite{Li:2010Optimizing-Linear} because the template strategy is fixed and only scaling factors need to be computed. Nevertheless the optimal solution to Problem~\ref{prob:budget} appears difficult since equation~(\ref{eqn:mse}) is still non-convex.  Instead of pursuing an optimal solution, we solve the problem approximately using the following greedy algorithm. The algorithm works in a bottom-up manner. It initially puts scale $1$ to all leaf queries in $Y$ and $0$ to all other queries in $Y$.  For query $q\in Y$, the algorithm chooses a $\lambda_q\in[0, 1]$. The scaling is then reallocated as follows: the scaling on each of its descendent $q'$ is reduced from $c_{q'}$ to $(1-\lambda_q)c_{q'}$ and the scaling on $q$ is $\lambda_q$. The value of $\lambda_q$ is chosen to minimize Equation~(\ref{eqn:mse}) after the scaling reallocation. Notice that the new scaling still satisfies the constraint in equation~(\ref{eqn:epsbound}).

\begin{figure}[t]
\centering
\includegraphics[width=0.45\textwidth]{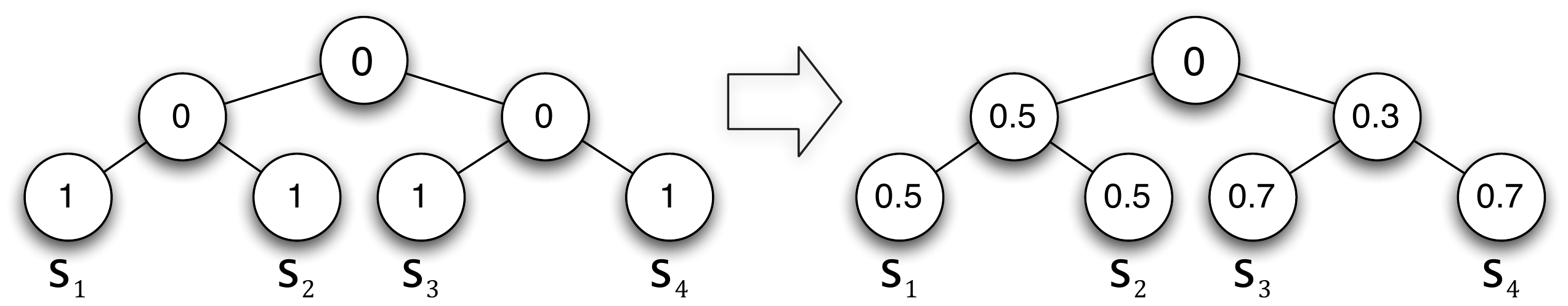}
\caption{Scaling allocation for the second level. $\lambda=0.5$ at the left node, and $\lambda=0.3$ at the right node.\label{fig:greedyexp}}
\end{figure}

\begin{example}
An example of the scaling reallocation is shown in Fig.~\ref{fig:greedyexp}, in which two different $\lambda$ are chosen for two nodes (queries) at the second level.
\end{example}

When the scaling reallocation terminates, the algorithm asks all scaled queries and adds $\Lap(1/\epsilon_2)$ noise to the answer of each query. After that, any inconsistencies  among those noisy query answers are resolved using ordinary least squares inference. 

The major challenge in the algorithm described above is to efficiently choose $\lambda_q$ in each step. To simplify the presentation, we always assume the branching factor $t=2$, though the discussion is valid for any branching factor.

For each interval query $q\in Y$, let $[i, j]$ be the corresponding interval of $q$. Use $\tWW_q$ to denote the matrix consisting of the $i^{th}$ to $j^{th}$ column of $\tWW$, and $\Y_q$ to denote the matrix consisting of the $i^{th}$ to $j^{th}$ column of the matrix of queries in the subtree of rooted at $q$. Let  $\D_{q}$ be the diagonal matrix whose diagonal entries are $c_q'$ for all $q'$ in the subtree rooted at $q$. For each query $q\in Y$ that is not on a leaf of $Y$, let $q_1, q_2$ be queries of its child nodes.

For each query $q\in Y$ that is not on a leaf of $Y$, according to the construction of $Y$, $q=q_1+q_2$. Hence $\tWW_q = [\tWW_{q_1}\: \tWW_{q_2}]$. Futher, since the queries in the subtree of $q$ are the union of queries in the subtree of $q_1$, $q_2$, as well as query $q$ itself, for a given $\lambda_q$,
$$\D_{q} = \left[\begin{smallmatrix} \lambda_q & 0 & 0 \\  0 & (1-\lambda_q)\D_{q_1} & 0 \\ 0 & 0 & (1-\lambda_q)\D_{q_2} \end{smallmatrix}\right].$$

When choosing a $\lambda_q$, due to the fact that the scalings on all ancestors of $q$ in $Y$ are $0$ at this moment, the matrix $\Y^T\D^2_{Y}\Y$ becomes a block diagonal matrix,  and $\Y^T_q\D^2_{q}\Y_q$ is one of its blocks. Therefore, the choice of $\lambda_q$ only depends on $\tWW_q$ and $Y_q$, which means $\lambda_q$ can be determined locally, by minimizing 
\begin{equation}\label{eqn:partialmse}
tr({\tWW_q}^T\tWW_q(\Y^T_q\D^2_{q}\Y_q)^{-1}).
\end{equation}
Since the only unknown variable in Eqn.~(\ref{eqn:partialmse}) is $\lambda_q$, solving its optimal solution is much easier than  solving the optimal scaling for all queries in Eqn.~(\ref{eqn:mse}). However, one of the problems of choosing $\lambda_q$ using equation~(\ref{eqn:partialmse}) is that it is biased towards $q$ and $\lambda_q$ is larger than required. When deciding the scaling on a query $q\in Y$, the scalings on all the ancestors of $q$ are $0$. Hence the scaling distribution is based on the assumption that all queries that contain $q$ are answered by $q$, which is not true after some ancestors of $q$ are assigned non-zero scalings. 

In order to reduce this bias, a heuristic decay factor $\mu$ is introduced to control the impact of $q$ on queries that need to be answered with $q$. The following matrix is used in equation~(\ref{eqn:partialmse}) to take the place of ${\tWW_q}^T\tWW_q$:
\begin{equation}\label{eqn:decaymat}
\mu{\tWW_q}^T\tWW_q + (1-\mu)\left[\begin{smallmatrix}{\tWW_{q_1}}^T\tWW_{q_1} & 0 \\ 0 & {\tWW_{q_2}}^T\tWW_{q_2}\end{smallmatrix}\right].
\end{equation}
As above, the bias of equation~(\ref{eqn:partialmse}) comes from the assumption that the scalings on all the ancestors of $q$ are $0$. Hence there will be less bias when $q$ is more close to the root of $Y$. In our implementation, $\mu$ is set to be $t^{-\frac{l}{2}}$ where $t$ is the branching factor of $Y$ and $l$ is the depth of $q$ in $Y$. Our algorithm then minimizes the following quantity instead of equation~(\ref{eqn:partialmse}).
\begin{equation}\label{eqn:decaypartialmse}
{\scriptstyle tr\left(\left(t^{-\frac{l}{2}}{\tWW_q}^T\tWW_q + (1-t^{-\frac{l}{2}})\left[\begin{smallmatrix}{\tWW_{q_1}}^T\tWW_{q_1} & 0\\ 0 & {\tWW_{q_2}}^T\tWW_{q_2}\end{smallmatrix}\right]\right)(\Y^T_q\D^2_{q}\Y_q)^{-1}\right).}
\end{equation}

At first glance, computing equation~(\ref{eqn:decaypartialmse}) seems complicated since $(\Y^T_q\D^2_{q}\Y_q)^{-1}$ needs to be recomputed for each $\lambda_q$. However, if we record some quantities in the previous step, it only takes $O(m(j-i+1)+(j-i+1)^2)$ time for the preprocessing and Eqn.~(\ref{eqn:decaypartialmse}) can be computed in $O(1)$ time for any $\lambda_q$. 
The detailed proof can be found in Appendix~\ref{sec:app:proof4}.

\begin{algorithm}[t]
\small
\caption{Estimating bucket counts $\s$.}
\label{alg:greedyhier}
    \begin{algorithmic}
    \Procedure{BucketCountEstimator}{$B$, $W$, $\x$, $\epsilon_2$}
    \State Given workload $W$ and buckets $B$, transform workload to $\tW$
    \State Let $Y$ be a tree of queries over buckets
    \State For each query $q\in Y$, let $c_q=1$ if $q$ is a leaf, and $c_q=0$ otherwise.
    \ForAll{$q\in Y$, from bottom to top}
    	\State Numerically find $\lambda_q$ that minimizing Equation~(\ref{eqn:decaypartialmse}).
	\State Let $c_q = \lambda_q$.
	\State For each descendent $q'$ of $q$, let $c_{q'}=(1-\lambda_q)c_{q'}$.
    \EndFor
    \State Let $\y$ be the vector of $c_qq(B(\x)) + \Lap(1/\epsilon_2)$ for all $q\in Y$.
    \State \Return $\s = (\Y^T\D_Y^2\Y)^{-1}(\D_{Y}\Y)^T\y$
    \EndProcedure
    \end{algorithmic}
\end{algorithm}

The entire process of computing bucket statistics is summarized in Algorithm~\ref{alg:greedyhier}. Notice that Algorithm~\ref{alg:greedyhier} just chooses a scaling $Y_c$ with sensitivity at most $1$, and answers $\s$ using $Y_c$ as a strategy. The privacy guarantee of Algorithm~\ref{alg:greedyhier} follows from that of the matrix mechanism.

\begin{proposition}Algorithm~\ref{alg:greedyhier} is $\epsilon_2$-differentially private.
\end{proposition}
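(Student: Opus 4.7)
The plan is to reduce the claim to the standard Laplace-mechanism guarantee together with closure under post-processing. The key structural observation is that the algorithm only touches the private data $\x$ in a single place, namely the line $\y = c_q q(B(\x)) + \Lap(1/\epsilon_2)$, so everything before that computation is independent of $\x$ and everything after it is a deterministic function of $\y$.

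First I would argue that the scaling factors $\{c_q\}$ are selected without consulting $\x$. The bottom-up loop picks each $\lambda_q$ by numerically minimizing Eqn.~(\ref{eqn:decaypartialmse}), whose only inputs are the transformed workload matrix $\tWW$ (a function of the public workload $\W$ and the partition $B$), the template tree $Y$ (fixed), and the current scalings $\D_q$. None of these depend on the data vector $\x$: $B$ is taken as input to this stage and $\W$ is public. Therefore $\D_Y$ is determined before any interaction with $\x$.

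Next I would verify that the scaled strategy $\D_Y \Y$, viewed as a linear map applied to $B(\x)$, has $L_1$ sensitivity at most $1$. Changing one record in the underlying database perturbs at most one coordinate $i$ of $B(\x)$ by $1$, so the change in $\D_Y \Y \, B(\x)$ equals the $i$-th column of $\D_Y \Y$, whose $L_1$ norm is $\sum_{q: q_i \neq 0} c_q$. This is precisely the quantity bounded by $1$ in Eqn.~(\ref{eqn:epsbound}), a constraint that the greedy allocation maintains by construction (starting from leaf scalings summing to $1$ along every root-to-leaf path and, at each reallocation, moving mass $\lambda_q$ up to an ancestor while multiplying the descendants by $1-\lambda_q$, which preserves the per-coordinate sum).

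Given these two facts, the vector $\y$ is produced by adding independent $\Lap(1/\epsilon_2)$ noise to each coordinate of a deterministic function of $\x$ with $L_1$ sensitivity at most $1$, so by the Laplace mechanism $\y$ is $\epsilon_2$-differentially private. Finally, the returned estimate $\s = (\Y^T \D_Y^2 \Y)^{-1}(\D_Y \Y)^T \y$ is a deterministic function of $\y$ (since $\Y$ and $\D_Y$ are data-independent), so by the post-processing property of differential privacy, $\s$ is also $\epsilon_2$-differentially private. I do not anticipate a serious obstacle; the only step that requires care is checking the column-sum sensitivity bound, which reduces to confirming that the invariant in Eqn.~(\ref{eqn:epsbound}) is preserved throughout the greedy reallocation loop.
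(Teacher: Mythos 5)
Your argument is correct and is essentially the paper's own proof with the black box opened: the paper simply notes that the greedy loop produces a strategy $Y_c$ with sensitivity at most $1$ (the invariant of Eqn.~(5)) and then invokes the privacy guarantee of the matrix mechanism, which is exactly the Laplace-mechanism-plus-post-processing argument you spell out. The two points you flag for care --- that the scalings are computed without touching $\x$, and that the reallocation preserves the per-coordinate column-sum bound (indeed it preserves it with equality, since $(1-\lambda_q)\cdot 1 + \lambda_q = 1$) --- both check out.
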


\vspace*{-3ex}
\begin{restatable}{theorem}{thmgreedycomplex}\label{thm:greedycomplex}
Algorithm~\ref{alg:greedyhier} takes $O(mk\log k+ k^2)$ time. In the worst case, $k=O(n)$, and Algorithm~\ref{alg:greedyhier} takes $O(mn\log n+ n^2)$ time.
\end{restatable}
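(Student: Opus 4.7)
The plan is to bound the runtime of each step of Algorithm~\ref{alg:greedyhier} and show that the greedy scaling loop dominates. The algorithm has four phases: (i) transforming $\W$ to $\tW$, (ii) building the template tree $Y$, (iii) the bottom-up greedy loop that selects each $\lambda_q$, and (iv) producing noisy answers and running least-squares inference on the hierarchical strategy. Since the tree $Y$ has branching factor $t$, it has $O(k)$ total nodes spread over $O(\log k)$ levels; building it (along with the initial scalings) is $O(k)$. Transforming $\W$ into $\tW$ over the bucket domain is $O(mk)$, since each of the $m$ range queries becomes a vector of length $k$ that can be filled by a single sweep across the bucket boundaries.

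The main work is phase (iii). For each node $q\in Y$ with corresponding interval $[i,j]$, the statement immediately before the algorithm asserts (with the proof deferred to Appendix~\ref{sec:app:proof4}) that one can preprocess $\tWW_q$, $\Y_q$, and $\D_q$ in $O(m(j-i+1)+(j-i+1)^2)$ time so that Eqn.~(\ref{eqn:decaypartialmse}) may thereafter be evaluated in $O(1)$ time for any candidate $\lambda_q$; numerically minimizing over $\lambda_q\in[0,1]$ (e.g.\ by a constant-accuracy 1-D search) then costs $O(1)$. I would assume this result and focus on summing the preprocessing costs over the tree.

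The key combinatorial observation is that at each level $\ell$ of $Y$, the intervals of the nodes at that level partition $[1,k]$, so $\sum_{q\text{ at level }\ell} (j_q-i_q+1) = k$. Hence the linear-in-length term sums to $mk$ per level, and summing over $O(\log k)$ levels gives $O(mk\log k)$. For the quadratic-in-length term, a node at level $\ell$ has interval length at most $t^\ell$, so
\begin{equation*}
\sum_{q\text{ at level }\ell}(j_q-i_q+1)^2 \;\le\; t^\ell \cdot \!\!\!\!\sum_{q\text{ at level }\ell}\!\!\!(j_q-i_q+1) \;=\; k\, t^\ell.
\end{equation*}
Summing the geometric series $\sum_{\ell=0}^{\log_t k} k\, t^\ell = O(k^2)$. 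Adding these two bounds yields $O(mk\log k+k^2)$ for the entire greedy loop.

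For phase (iv), answering the $O(k)$ scaled strategy queries on $\B(\x)$ takes $O(k)$ time bottom-up in the tree; generating the Laplace noise vector is $O(k)$; and the ordinary-least-squares inference for a hierarchical strategy can be carried out in $O(k)$ by the well-known two-pass algorithm for trees, or at worst $O(k^2)$ using the explicit inverse. Either way, this phase is absorbed into the $O(k^2)$ term. Combining everything gives the claimed $O(mk\log k+k^2)$ bound, and the worst-case statement then follows by substituting $k=O(n)$. The main obstacle I expect is cleanly verifying the $O(m(j-i+1)+(j-i+1)^2)$ preprocessing claim (this is where the $\tWW^T\tWW$ block sums and the Cholesky-like updates on $\Y_q^T\D_q^2\Y_q$ must be maintained incrementally), but that work is explicitly deferred to the appendix and only needs to be invoked here.
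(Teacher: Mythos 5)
Your accounting of how the per-node costs aggregate over the tree is correct, and in fact it is \emph{more} explicit than the paper's own write-up: the paper's proof in Appendix~\ref{sec:app:proof4} ends with the single sentence ``summing the costs together proves the theorem,'' whereas you actually carry out the level-by-level summation (the intervals at each level partition $[1,k]$, so the linear term gives $O(mk)$ per level and $O(mk\log k)$ overall, and the quadratic term telescopes as a geometric series to $O(k^2)$). Your treatment of the remaining phases (workload transformation, tree construction, answering the strategy queries, and least squares) is also consistent with the paper, which likewise absorbs the inference step into $O(k^2)$ by reusing the already-computed $(\Y^T\D_Y^2\Y)^{-1}$.

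The gap is that the claim you invoke and defer --- that $O(m(j-i+1)+(j-i+1)^2)$ preprocessing per node suffices to evaluate Eqn.~(\ref{eqn:decaypartialmse}) in $O(1)$ time for any candidate $\lambda_q$ --- is not a side lemma: it \emph{is} the paper's proof of this theorem. Essentially all of Appendix~\ref{sec:app:proof4} is devoted to establishing it, and without it the theorem does not follow (a naive evaluation of $(\Y_q^T\D_q^2\Y_q)^{-1}$ for each trial value of $\lambda_q$ would be cubic in the interval length, per trial). The paper closes this by exploiting two structural facts: because the children's subtrees have disjoint column supports, $\Y_q^T\D_q^2\Y_q$ is a rank-one perturbation (coming from the parent query $q=q_1+q_2$) of a block-diagonal matrix built from $\Y_{q_1}^T\D_{q_1}^2\Y_{q_1}$ and $\Y_{q_2}^T\D_{q_2}^2\Y_{q_2}$; applying the Woodbury identity to this perturbation expresses the objective in closed form as a rational function of $\lambda_q$ whose coefficients are six quantities ($m_{q_1}$, $m_{q_2}$, the two child traces, and two norms involving $\tWW_{q_1}\v_{q_1}$ and $\tWW_{q_2}\v_{q_2}$) computable in $O(m(j-i+1)+(j-i+1)^2)$ time from data cached at the children. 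If you intend your argument to stand as a proof of the theorem rather than a reduction to the stated-but-unproven claim of Sec.~\ref{sec:greedy_scaling}, this derivation is the missing core.
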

\vspace*{-1.5ex}

Hence, Algorithm~\ref{alg:greedyhier} costs much less time than previous general query selection approaches in the matrix mechanism~\cite{Li:2010Optimizing-Linear, Yuan12Low-Rank}.

\section{Experimental Evaluation} \label{sec:experiments}

We now evaluate the performance of \DW on multiple datasets and workloads, comparing it with recently-proposed algorithms (in Sec.~\ref{sec:mainexpt}).  We also examine the effectiveness of each of the two main steps of our algorithm (Sec.~\ref{sec:sub:steps}).  Finally, we consider an extension of our technique to two-dimensional spatial data and compare it with state-of-the-art algorithms (Sec.~\ref{sec:exptspatial}).

\subsection{Experimental setup} \label{expt:setup}

In the experiments that follow, the primary metric for evaluation is the average $L_1$ error  per query for answering the given workload queries. Most workloads we use are generated randomly (as described below). 
Each experimental configuration is repeated on 5 random workloads with 3 trials for each workload.  The results reported are the average across workloads and trials.  The random workloads are generated once and used for all experiments.  

The privacy budget in \DW is set as $\epsilon_1 = 0.25\epsilon$ and $\epsilon_2 = 0.75\epsilon$.  Unless otherwise specified, the first step of \DW constructs a partition using intervals whose lengths must be a power of 2, an approximation that is described in Sec.~\ref{sec:partition}.  For the second step of the algorithm, the branching factor of the query tree is set to 2.
\vspace*{-1.5ex}
\paragraph*{Datasets} There are seven different 1-dimensional datasets considered in our experiments. Although these datasets are publicly available, many of them describe a type of data that could be potentially sensitive, including financial, medical, social, and search data. \textsf{Adult} is derived from U.S. Census data~\cite{Bache+Lichman:2013}: the histogram is built on the ``capital loss'' attribute, which is the same attribute used in \cite{hardt2012a-simple}. \textsf{Income} is based on the IPUMS American community survey data from 2001-2011; the histogram attribute is personal income~\cite{ipums::usa_v5.0}. \textsf{Medical Cost} is a histogram of personal medical expenses based on a national home and hospice care survey from 2007~\cite{icpsr}. \textsf{Nettrace} is derived from an IP-level network trace collected at the gateway router of a major university.  The histogram attribute is the IP address of internal hosts and so the histogram reports the reports the number of external connections made by each internal host~\cite{hay2010boosting}. \textsf{Search Logs} is a dataset extracted from search query logs that reports the frequency of the search term ``Obama'' over time (from 2004 to 2010)~\cite{hay2010boosting}.  Furthermore, we consider two temporal datasets derived from two different kinds of network data.   \textsf{HepPh} is a citation network among high energy physics pre-prints on arXiv and \textsf{Patent} is a citation network among a subset of US patents~\cite{snap}.  These last datasets describe public data but serve as a proxy for social network data, which can be highly sensitive.   
For both datasets, the histogram reports the number of new incoming links at each time stamp. To eliminate the impact of domain size in comparing the ``hardness'' of different datasets, all datasets above are aggregated so that the domain size $n$ is 4096.  

\vspace*{-1.5ex}
\paragraph*{Query workloads} We run experiments on four different kinds of workloads. The \textit{identity} workload consists of all unit-length intervals $[1,1], [2,2], \dots, [n,n]$. The \textit{uniform interval} workload samples 2000 interval queries uniformly at random. In addition, workloads that are not uniformly distributed over the domain are also included. 
%
The \textit{clustered interval} workload first samples five numbers uniformly from $[1,n]$ to represent five cluster centers and then samples 400 interval queries for each cluster.  
 Given cluster center $c$, an interval query is sampled as $[c - |X_{\ell}|, c + |X_{r}|]$ where $X_{\ell}$ and $X_{r}$ are independent random variables from a normal distribution with a standard deviation of $256$.  The \textit{large clustered interval} workload is generated in the same way but the standard deviation is $1024$.

\vspace*{-1.5ex}
\paragraph*{Competing algorithms} We compare \DW with six algorithms. For data-independent algorithms, we include a simple approach (Identity) that adds Laplace noise to each entry of $\x$ and the Privelet algorithm~\cite{xiao2010differential}, which is designed to answer range queries on large domains. For data-dependent algorithms, we compare with EFPA~\cite{Acs2012compression}, P-HP~\cite{Acs2012compression}, StructureFirst~\cite{xu2013differential},\footnote{The other algorithms from Xu et al.~\cite{xu2013differential} take more than 20 hours to complete a single trial.  Therefore, they are not included.} and MWEM~\cite{hardt2012a-simple}, all of which are described in Sec.~\ref{sec:related}. For MWEM, we set the number of iterations, $T$, to the value in $\{10, 20, \dots, 190$, $ 200\}$ that achieves the lowest error on each dataset for the \emph{uniform intervals} workload and $\epsilon=0.1$. We use that $T$ for all experiments on that dataset.
%

With the exception of MWEM, all algorithms are quite efficient, usually finishing in seconds. \!MWEM slows down for harder datasets which require a high $T$, taking up to ten seconds on these datasets.

\subsection{Accuracy for interval workloads} \label{sec:mainexpt}
\begin{figure}[t]
\vspace*{-.5ex}
\includegraphics[width=.45\textwidth]{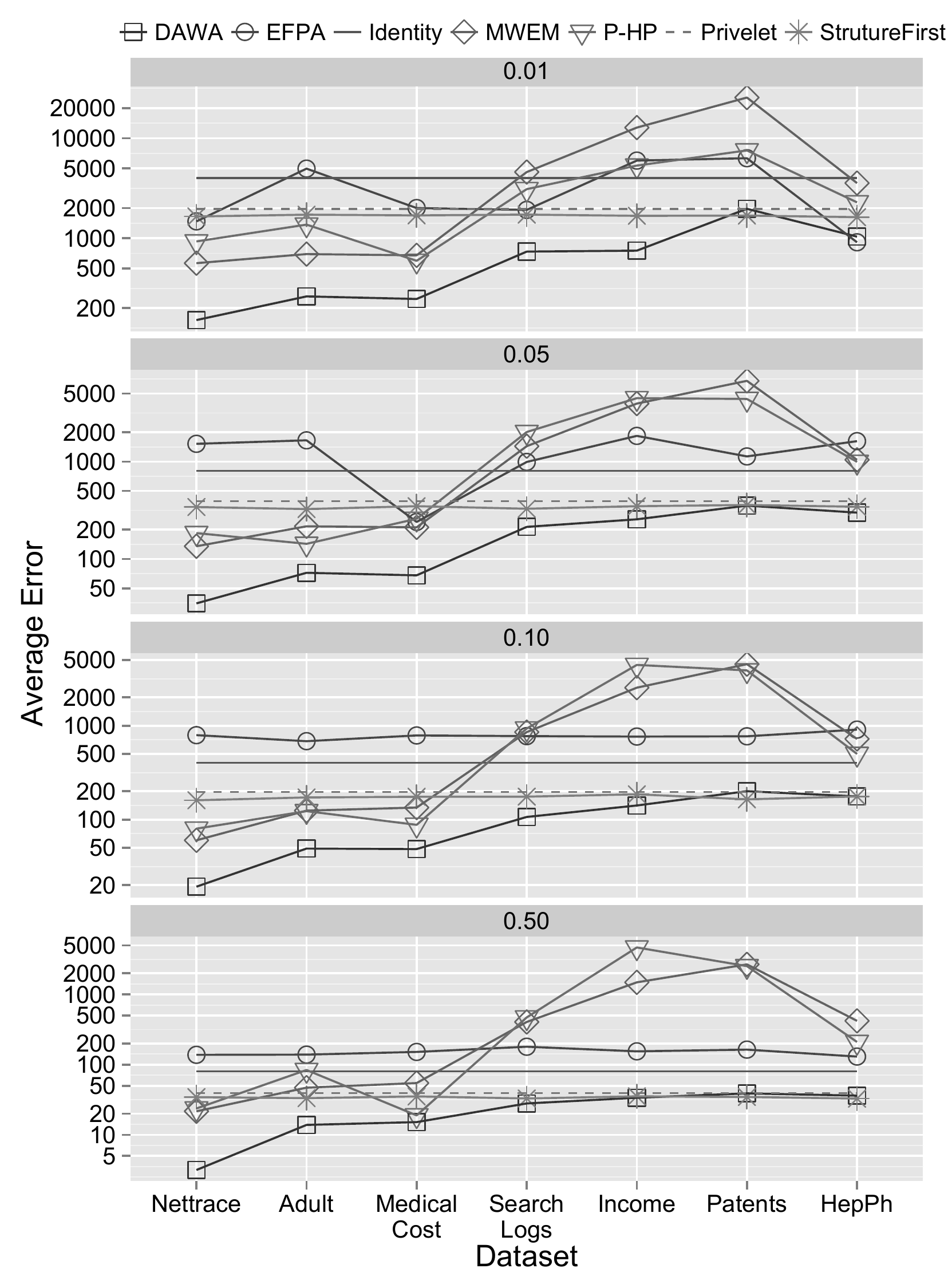}
\caption{\label{fig:allerr} Average error on the {\em uniform intervals} workload across multiple datasets.  The privacy budget ranges from $\epsilon=0.01$ (top) to $\epsilon=0.5$ (bottom). }
\end{figure}

\begin{table}[t]
\small
\centering
\begin{tabular}{|c|c|c|c|c|c|c|}
\hline \textsf{\scriptsize Nettrace} & \textsf{\scriptsize Adult} & \textsf{\scriptsize Med. Cost} & \textsf{\scriptsize S. Logs} & \textsf{\scriptsize Income} & \textsf{\scriptsize Patents} & \textsf{\scriptsize HepPh}\\
\hline
22 & 29 & 20 & 500 & 1537 & 1870 & 2168\\
\hline
\end{tabular}
\caption{The number of buckets, $k$, in the optimal partition when $\epsilon=0.1$.  The original domain size is $n=4096$ for each dataset. \label{tab:optimalk}}
\end{table}

\begin{table}[t]
\centering
\small
\subfigure[Smallest ratio across datasets]{ 
\begin{tabular}{|c||c|c|c|c|c|c|}
\hline $\epsilon$ & {\scriptsize Identity} & {\scriptsize Privelet} & {\scriptsize MWEM} & {\scriptsize EFPA} & {\scriptsize P-HP} & {\scriptsize S. First} \\
\hline 0.01 & 2.04 & 1.00 & 2.65 & 0.88 & 2.20 & 0.86 \\
\hline 0.05 & 2.27 & 1.11 & 3.00 & 3.20 & 1.98 & 1.01 \\
\hline 0.1 & 2.00 & 0.98 & 2.54 & 3.84 & 1.81 & 0.82 \\
\hline 0.5 & 2.06 & 1.01 & 3.39 & 3.60 & 1.25 & 0.89 \\
\hline
\end{tabular} \label{tab:smallest}}
\subfigure[Largest ratio across datasets]{
\begin{tabular}{|c||c|c|c|c|c|c|}
\hline $\epsilon$ & {\scriptsize Identity} & {\scriptsize Privelet} & {\scriptsize MWEM} & {\scriptsize EFPA} & {\scriptsize P-HP} & {\scriptsize S.First} \\
\hline 0.01 & 26.42 & 12.93 & 17.00 & 18.94 & 7.09 & 10.85 \\
\hline 0.05 & 22.97 & 11.24 & 19.14 & 43.58 & 17.57 & 9.77 \\
\hline 0.1 & 20.85 & 10.20 & 22.54 & 41.09 & 31.41 & 8.32 \\
\hline 0.5 & 25.47 & 12.46 & 68.75 & 43.69 & 138.14 & 10.89 \\
\hline
\end{tabular} \label{tab:largest}}
\caption{\label{tab:allerr} Ratio of algorithm error to \DW error, for each competing algorithm and $\epsilon$ setting on \emph{uniform intervals}: \subref{tab:smallest} smallest ratio observed across datasets; \subref{tab:largest} largest ratio across datasets.}
\end{table}

Fig.~\ref{fig:allerr} presents the main error comparison of all algorithms on workloads of \textit{uniform intervals} across a range of datasets and settings of $\epsilon$.  
While data-independent algorithms like Privelet and Identity offer constant error across datasets, the error of data-\\dependent algorithms can vary significantly.\footnote{StructureFirst is an exception to this trend: its observed performance is almost totally independent of the dataset.  Its partition selection algorithm uses a high sensitivity scoring function (which is based on $L_2$ rather than $L_1$).  Thus, partition selection is very noisy and close to random for all datasets.}  For some datasets, data-dependent algorithms can be much more accurate.  For example, on \textsf{Nettrace} with $\epsilon=0.01$, \emph{all} of the data-dependent algorithms have lower error than the best data-independent algorithm (Privelet).  For this dataset, the error of \DW is at least an order of magnitude lower than Privelet.  These results suggest the potential power of data-dependence.

There are other datasets, however, where the competing data-dependent algorithms appear to break down.    In the figure, the datasets are ordered by the cost of an optimal partition (i.e., an optimal solution to Step 1 of our algorithm) when $\epsilon_2=0.1$.  This order appears to correlate with ``hardness.''  Datasets on the left have low partition cost and appear to be relatively ``easy,'' presumably because data-dependent algorithms are able to exploit uniformities in the data.  However, as one moves to the right, the optimal partition cost increases and the datasets appear to get more difficult.  It is on many of the ``harder'' datasets where competing data-dependent algorithms suffer: their error is higher than even a simple baseline approach like Identity.  

In contrast, \DW does not break down when the dataset is no longer ``easy.''  On the moderately difficult dataset \textsf{Search Logs}, \DW is the only data-dependent algorithm that outperforms data-independent algorithms. On the ``hardest'' datasets, its performance is comparable to data independent techniques like Privelet.  \DW comes close to achieving the best of both worlds: it offers very significant improvement on easier datasets, but on hard datasets roughly matches the performance of data-independent techniques.

For the same workload, datasets, and algorithms, Table~\ref{tab:allerr} reports the performance of \DW relative to other algorithms.  Each cell in the table reports the ratio of algorithm error to \DW error.  Table~\ref{tab:allerr}\subref{tab:smallest} reports the smallest ratio achieved over all datasets---i.e., how close the competing algorithm comes to matching, or in some cases beating, \DW\hspace{-2pt}.  Table~\ref{tab:allerr}\subref{tab:largest} reports the largest ratio achieved---i.e., how much worse the competing algorithm can be on some dataset.  Table~\ref{tab:allerr}\subref{tab:largest} reveals that every competing algorithm has at least 7.09 times higher error than \DW on some dataset.  

Table~\ref{tab:allerr}\subref{tab:smallest} reveals that \DW is sometimes less accurate than another algorithm, but only moderately so.  This occurs on the ``hardest'' datasets, {\sf Patents} and {\sf HepPh}, where \DW has error that is at most $\frac{1}{0.82} \approx 22\%$ higher than other approaches.  On these hard datasets, the optimal partition has thousands of buckets (see Table~\ref{tab:optimalk}), indicating that it is highly non-uniform.  On non-uniform data, the first stage of the \DW algorithm spends $\epsilon_1$ of the privacy budget just to select a partition that is similar to the base buckets.  Despite the fact that the first stage of the algorithm does not help much on ``hard'' datasets, \DW is still able to perform comparably to the best data-independent technique, in contrast to the other data dependent strategies which perform poorly on such ``hard'' datasets.

In addition to {\em uniform interval} workload, we also ran experiments on the other three types of workloads.  The performance of \DW relative to its competitors is qualitatively similar to the performance on {\em uniform interval} workload shown above. Due to limited space, the figures are omitted.

\subsection{Analyzing the performance of \DW} \label{sec:sub:steps}

To further understand the strong performance shown above, we study the two steps of \DWnospace in detail, first by isolating the impact of each step, and then by assessing the effectiveness of the approximations made in each step. 

\subsubsection{Isolating the two steps} \label{sec:dawabreak}

\begin{figure}[t]
\centering
\vspace*{-1ex}
\includegraphics[width=.44\textwidth]{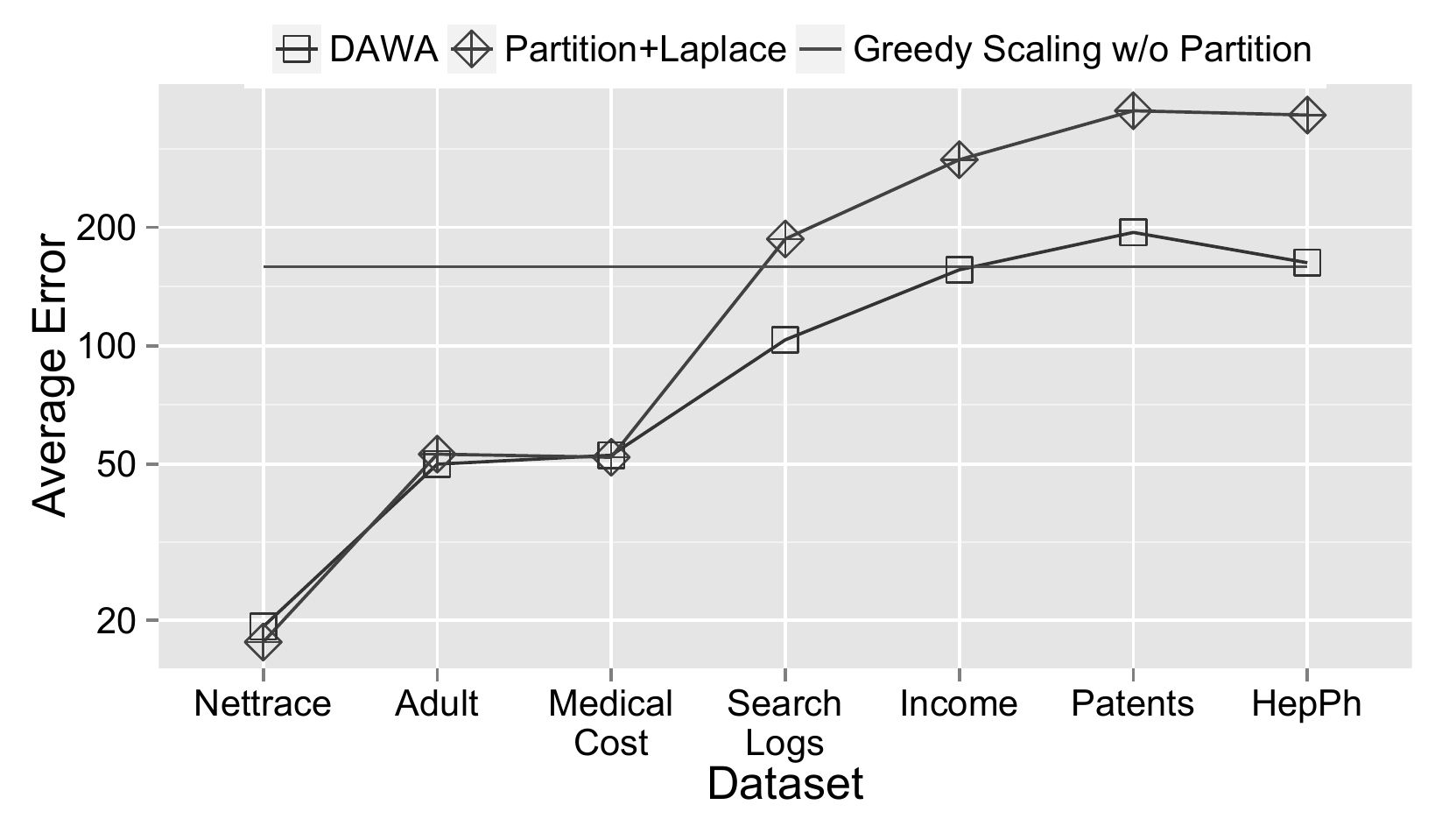}
\caption{Average error of isolated parts of \DW, $\epsilon=0.1$\label{fig:compdawa}.}
\end{figure}

To isolate the performance of each of the two stages of the \DW algorithm, we consider two \DW variants.  The first variant combines the first stage of \DW with the Laplace mechanism (Partition+Laplace).  This algorithm is data dependent but not workload aware.  This variant has two stages like \DWnospace, and the budget allocation is the same: a quarter of the budget is spent on partitioning and the rest on estimating bucket counts.  The second variant omits the first stage of \DW and runs the second stage of the algorithm on the original domain (Greedy Scaling w/o Partition).  For this variant, the entire budget is allocated to estimating counts. This algorithm is workload-aware but data-independent, thus its performance is the same across all datasets.

Fig.~\ref{fig:compdawa} shows the results.  On the ``easier'' datasets, \DW has much lower error than Greedy Scaling w/o Partition.  For these datasets, which have large uniform regions, allocating a portion of the privacy budget to selecting a data-dependent partition can lead to significant reductions in error.  In these cases, the benefit outweighs the cost.
On ``hard'' datasets, where most data-dependent algorithms fail, the partitioning does not appear to help much.  One reason may be that on these datasets even the optimal partition has many buckets (Table~\ref{tab:optimalk}), so the partitioned dataset is not radically different from the original domain.  However, even on these hard datasets, \DW is almost as accurate as Greedy Scaling w/o Partition, suggesting that there is still enough improvement from partitioning to justify its cost.  

Finally, we can examine the effect of the second stage of \DW by comparing \DW against Partition+Laplace.  
On ``easy'' datasets, they perform about the same.  On these datasets, the partition selected in the first stage has a small number of buckets, which means that the input to the second step is a small domain.  Since the Laplace mechanism works well on small domains, the lack of a performance difference is not surprising.  However, on ``harder'' datasets, the partitions produced by the first stage have a large number of buckets and Partition+Laplace performs poorly.  In such cases, using the second step of \DW proves highly beneficial and \DW has much lower error than Partition+Laplace.

\subsubsection{Effectiveness of partition selection} \label{sec:exptpartitions}

\begin{figure}[t]
\centering
\vspace*{-1ex}
\includegraphics[width=.44\textwidth]{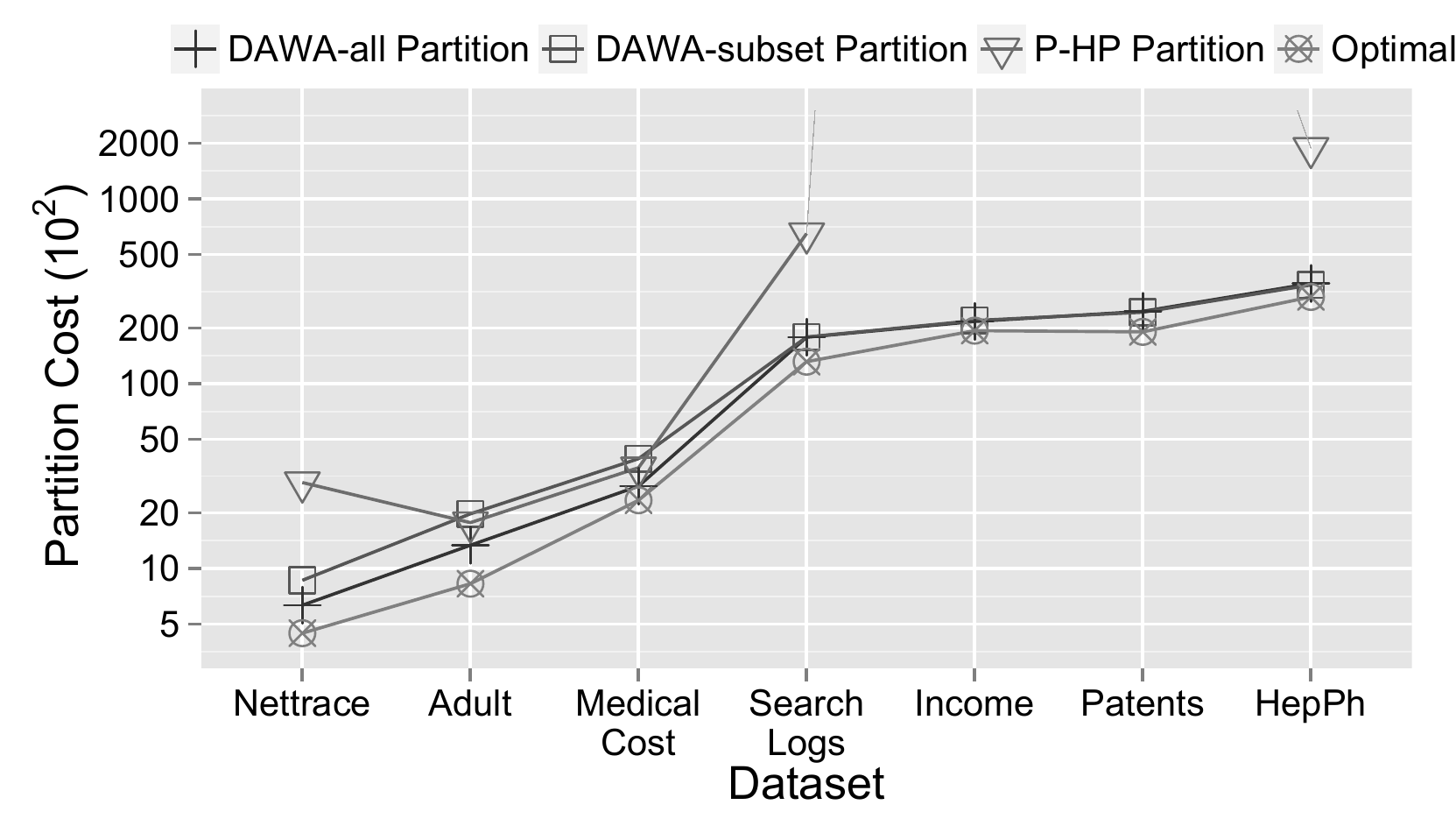}
\caption{A comparison of alternative algorithms for the first step of \DW with $\epsilon=0.1$.\label{fig:vsphp}}
\vspace*{-.5ex}
\end{figure}


Here we evaluate the effectiveness of the first step of the \DW algorithm, partition selection (Algorithm~\ref{alg:privpartintervals}).  Recall from Sec.~\ref{sec:privatepartition} that it is possible to restrict the set of intervals considered in selecting the partition.  We compare two versions of the algorithm: \DWapprox only considers intervals whose lengths are a power of two, \DWall considers all possible intervals.  We compare these variants with the optimal solution, which is computed by solving Problem~\ref{problem:optpartition} using the bucket cost without noise, ignoring privacy considerations.  Finally, we compare with P-HP~\cite{Acs2012compression}, which is also designed to solve Problem~\ref{problem:optpartition}. To facilitate a fair comparison, for this experiment each algorithm spends the same amount of privacy budget on selecting the partition. 

The results are shown in Fig.~\ref{fig:vsphp} for $\epsilon=0.1$ where the y-axis measures the partition cost (Def.~\ref{def:l1cost}). We further assume $\epsilon_2=0.1$ when computing the cost of each bucket in \DW. The partition cost of \DWall is close to optimal.  The cost of the partition of \DWapprox is sometimes higher than that of \DWall especially on ``easier'' datasets.  Generally, however, \DWapprox and \DWall perform similarly.  This suggests that the efficiency benefit of \DWapprox does not come at the expense of utility.  The cost of the partition selected by P-HP is almost as low as the cost of the \DWapprox partition on the \textsf{Adult} and \textsf{Medical Cost} datasets, but it is orders of magnitude larger on other datasets (on \textsf{Income} and \textsf{Patents} it is at least $1.6\times 10^6$).  This provides empirical evidence that Algorithm~\ref{alg:privpartintervals} is much more accurate than the recursive bisection approach of P-HP.  The results with $\epsilon \in \set{0.01, 0.05, 0.5}$ are similar and omitted.

\subsubsection{Effectiveness of adapting to workload}  \label{sec:exptgreedy}
\begin{figure}[t]
\includegraphics[width=.45\textwidth]{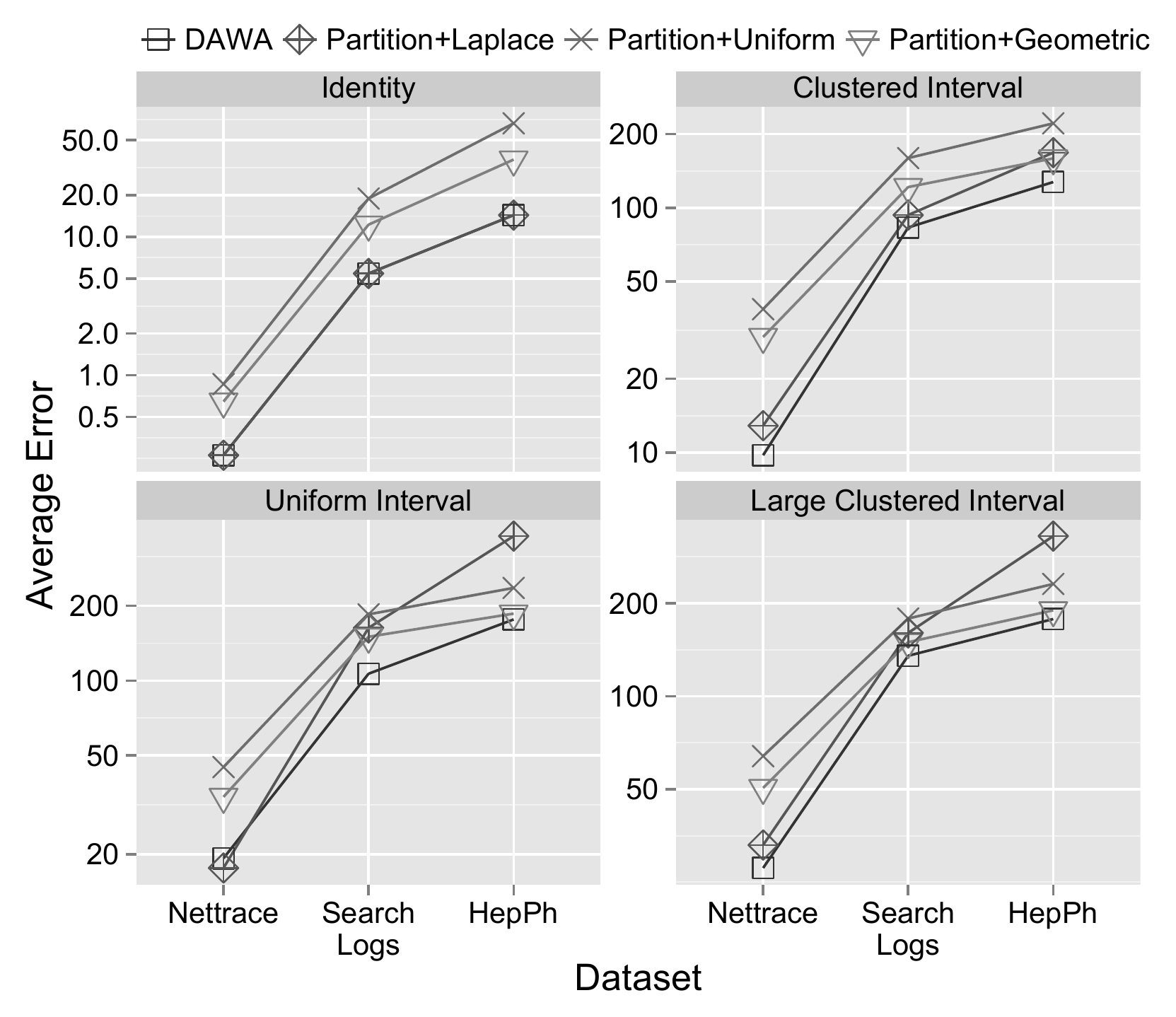}
\caption{A comparison of alternative algorithms for the second step of \DW across different workloads, with $\epsilon=0.1$\label{fig:compsec}. }
\end{figure}

The second stage of \DW designs a query strategy that is tuned to the workload, as described by Algorithm~\ref{alg:greedyhier}.  Here we combine our partitioning algorithm in the first step with some alternative strategies and compare them with \DW to evaluate the effectiveness of our greedy algorithm. Two alternative ways to scale queries in $Y$ are considered: all queries are given the same scaling (Partition+Uniform) based on Hay et al.~\cite{hay2010boosting}, and the scaling decreases geometrically from leaves to root (Partition+Geometric) based on Cormode et al.~\cite{Cormode11Differentially}. The Laplace mechanism (Partition+Laplace) is also included. Among the alternative algorithms, Partition+Geomet\-ric is designed to answer \textit{uniform interval} workloads, and the La-place mechanism is known to be the optimal data-independent mechanism for the \textit{identity} workload.  We do not consider any data-dependent techniques as alternatives for the second step.  After partitioning, uniform regions in the data have been largerly removed and our results show that the data-dependent algorithms perform poorly if used in this step.


Fig.~\ref{fig:compsec} shows results for four different workloads and three different datasets at $\epsilon = 0.1$.  The datasets span the range of difficulty from the ``easier'' \textsf{Nettrace} to the ``harder'' \textsf{HepPh}.  (The algorithms being compared here are affected by the dataset because they operate on the data-dependent partition selected in the first stage.)
The original \DW performs very well on all cases. In particular, it always outperforms Partition+Geometric on \textit{uniform interval} workload and performs exactly same as the Partition+Laplace mechanism on \textit{identity} workload. In the latter case, we find that the greedy algorithm in the second step outputs the initial budget allocation, which is exactly same as the Laplace mechanism. 
\vspace*{-.15ex}

\subsection{Case study: spatial data workloads} \label{sec:exptspatial}
\begin{figure}[t]
\centering
\includegraphics[width=.44\textwidth]{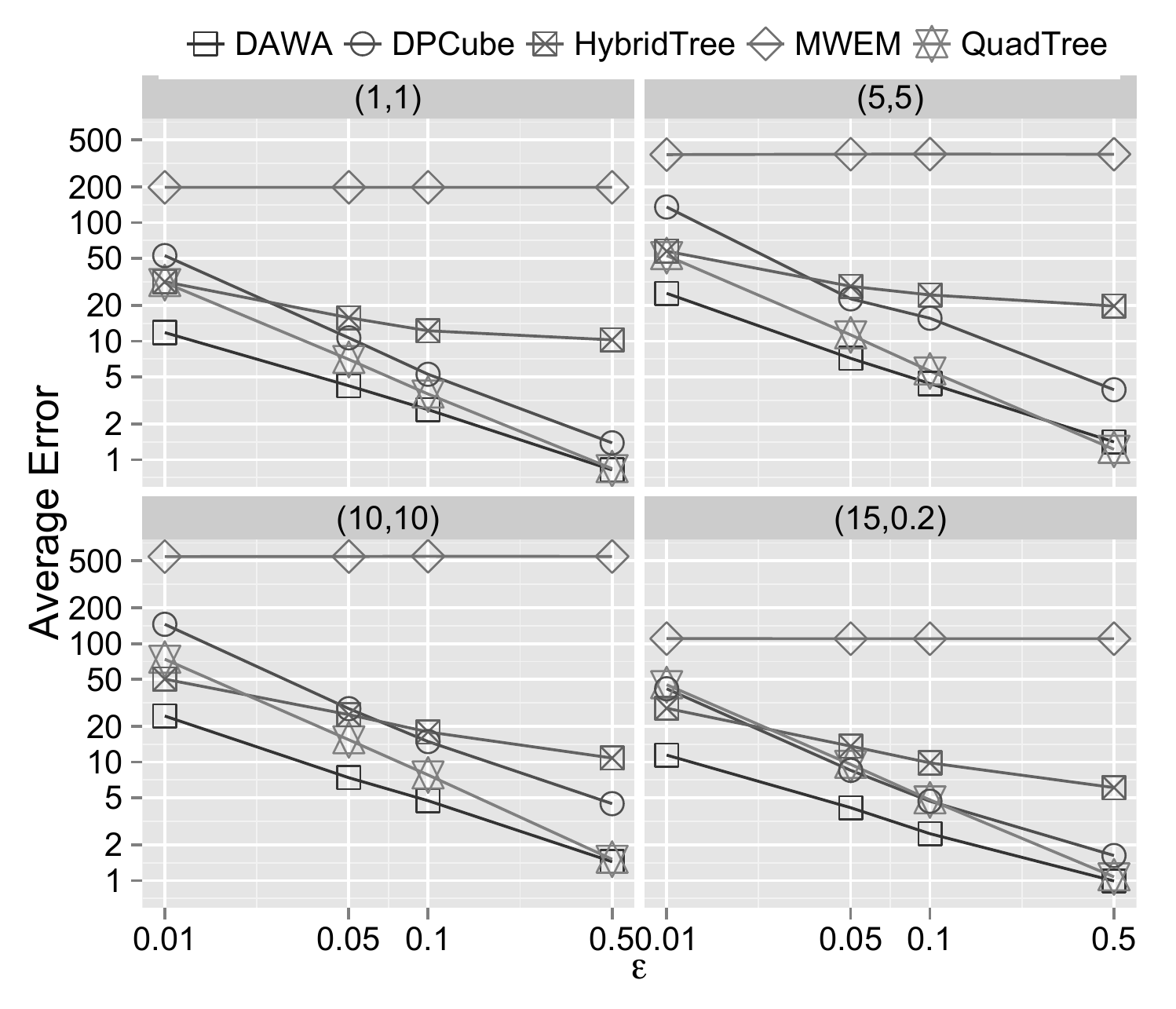}
\caption{\label{fig:2d}Average error answering query workloads on spatial data.  Each workload is a batch of random rectangle queries of a given $(x, y)$ shape.}
\label{fig:tiger2d}
\end{figure}

Lastly, we evaluate an extension to our main algorithm to compute histograms over two dimensional spatial data.  We use an experimental setup that is almost identical to previous work~\cite{Cormode11Differentially}; differences are highlighted below.  The dataset describes the geographic coordinates (latitude and longitude) of road intersections, which serve as a proxy for human population, across a wide region in the western U.S.~\cite{tiger}.  
Over this region, we generate a workload of random rectangle queries of four different shapes: $(1, 1)$, $(5, 5)$, $(10, 10)$, and $(15, 0.2)$ where shape $(x, y)$ is a rectangle that covers $x$ degrees of longitude and $y$ degrees of latitude. 

We compare with a data-independent algorithm, QuadTree~\cite{Cormode11Differentially}, and data-dependent algorithms MWEM, HybridTree~\cite{Cormode11Differentially}, and \\DPCube \cite{Xiao:2012fk}.
Among these algorithms, only MWEM is workload-aware.  Since some algorithms expect discrete domains as input, we discretize the domain by partitioning the space into the finest granularity used by the QuadTree, whose height is $10$~\cite{Cormode11Differentially}.  Thus, both longitude and latitude are split evenly into $2^{10}$ bins. 

To extend the \DW algorithm to two dimensional data, we use a Hilbert curve of order 20 to convert the $2^{10}\times 2^{10}$ grid into a 1-dimensional domain with size $2^{20}$.  In case the query region only partially covers some bins in the discretized domain, the query answer is estimated by assuming uniformity within each bin. 

Fig.~\ref{fig:2d} shows the results.  Although \DW is designed for interval workloads on 1-dimensional data, it performs as well or better than algorithms specifically designed to support rectangular range queries on 2-dimensional data.  The performance gap between\\ \DW and its competitors increases as $\epsilon$ decreases.  


\section{Related work}
\label{sec:related}

A number of data-dependent algorithms have been developed recently~\cite{Acs2012compression,Cormode11Differentially,hardt2012a-simple,Xiao:2012fk,xu2013differential}. We empirically compare \DW against these approaches in Sec.~\ref{sec:experiments}.  
Our algorithm is most similar to P-HP~\cite{Acs2012compression} and Structure First~\cite{xu2013differential}, both of which find a partition and then compute statistics for each bucket in the partition.  P-HP's approach to partitioning is based on the same optimization as presented here (Problem~\ref{problem:optpartition}).  It uses the exponential mechanism to recursively bisect each interval into subintervals.  A key distinction is that P-HP is an approximation algorithm: even if $\epsilon \rightarrow \infty$, it may not return the least cost partition.  In contrast, we show that the optimization problem can be solved directly by simply using noisy scores in place of actual scores and we prove that a constant amount of noise is sufficient to ensure privacy.  The experiments in Sec.~\ref{sec:exptpartitions} show that P-HP consistently returns higher cost partitions than our approach.  Structure First~\cite{xu2013differential} aims to solves a different optimization problem (with a cost function based on $L_2$ rather $L_1$).  In addition, it requires that the user specify $k$, the number of buckets, whereas \DW automatically selects the best $k$ for the given dataset.  Neither P-HP nor StructureFirst is workload-aware.

The other data-dependent mechanisms use a variety of different strategies. The EFPA~\cite{Acs2012compression} algorithm transforms the dataset to the Fourier domain, samples noisy coefficients, and then transforms back.  DPCube~\cite{Xiao:2012fk} and Hybrid Tree~\cite{Cormode11Differentially}, both of which are designed for multi-dimensional data, construct estimates of the dataset by building differentially private KD-trees.  MWEM~\cite{hardt2012a-simple} derives estimate of the dataset iteratively: each iteration selects a workload query, using the exponential mechanism, and then updates its estimate by applying multiplicative weights given a noisy answer to the query.  MWEM supports the more general class of linear queries, whereas \DW is designed to support range queries.  MWEM also offers strong asymptotic performance guarantees.  \\ However, on workloads of range queries, we find in practice that MWEM performs poorly except when the dataset is highly uniform. It is also limited by the fact that it can only asks workload queries, which may not be the best observations to take.

General data-dependent mechanisms are proposed in the theory community~\cite{DBLP:conf/focs/DworkRV10,Gupta:2012uq}.  They are not directly comparable because they work on a slightly weaker variant of differential privacy and are not computationally efficient enough to be practical.

Data-independent mechanisms attempt to find a better set of measurements in support of a given workload, then apply the Laplace mechanism and inference to derive consistent estimates of the workload queries.  Our \DW algorithm would be similar to these methods if the partitioning step always returned the trivial partition, which is $\x$ itself.  Many of these techniques fall within the matrix mechanism framework~\cite{Li:2010Optimizing-Linear}, which formalizes the measurement selection problem as a rank-constrained SDP.  While the general problem has high computational complexity, effective solutions for special cases have been developed.  To support range queries, several mechanisms employ a hierarchical strategy~\cite{Cormode11Differentially,hay2010boosting,xiao2010differential}.  Our approach builds on this prior work. A key difference is that our algorithm adapts the strategy to fit the specific set of range queries given as a workload, resulting in lower workload error.  Other strategies have been developed for marginal queries~\cite{barak2007privacy, Ding11Differentially}.  Yuan et al.~\cite{Yuan12Low-Rank} revisit the general problem for the case when workloads are small relative to the domain size; however the algorithm is too inefficient for the domain sizes we consider here.  Other algorithms have been developed that adapt to the workload.  However, they are not directly applicable  because they are designed for a weaker variant of differential privacy~\cite{chaopvldb12}, or employ a user-specified ``recovery'' matrix, instead of ordinary least squares~\cite{Yaroslavtsev13Accurate}.

\section{Conclusion \& Future Work}
$\DW$ is a two-stage, data- and workload-aware mechanism for answering sets of range queries under differential privacy. \DW first partitions the domain into approximately uniform regions and then derives a count for each region using measurements of varying accuracy that are tuned to the workload queries.  Experimentally, \DW achieves much lower error than existing data-dependent mechanisms on datasets where data-dependence really helps.  On complex datasets, where competing data-dependent techniques suffer, \DW does about the same or better than data-independent algorithms. In this sense, \DW achieves the best of both worlds.



Our results have shown that, for some datasets, data-aware algorithms can reduce error by a factor of 10 or more over competitive data-independent techniques.  But it remains difficult to characterize exactly the properties of a dataset that permit lower error under differential privacy.  Optimal partition cost of a dataset provides some insight into dataset ``hardness'' for the \DW algorithm, but we are not aware of a general and fully-satisfying measure of dataset complexity.  We view this as an important direction for future work.   We would also like to consider extensions to private partitioning that would directly incorporate knowledge of the workload and to extend  our method to a larger class of worloads beyond one- and two-dimensional range queries. 

\paragraph*{Acknowledgements} We are grateful for the comments of the anonymous reviewers. This work is supported by the NSF through grants CNS-1012748, IIS-0964094, CNS-1129454, and by CRA/\-CCC through CI Fellows grant 1019343.

\label{sec:conclusion}

\bibliographystyle{abbrv}
\bibliography{refs}
\newpage
\onecolumn
\appendix
\section{Detailed algorithm description}
This section presents the detailed implementation of subroutines in Algorithm~\ref{alg:privpartintervals}, including {\sc AllCosts} and {\sc LeastCostPartition}.

\subsection{Computing bucket costs efficiently} \label{sec:computecosts}

This section describes the {\sc AllCosts} algorithm which computes the cost for all buckets.  The algorithm takes only $O(\log n)$ time per bucket.  A pseudocode description is shown in Algorithm~\ref{alg:lenscore}.

\begin{algorithm}[t]
\caption{Compute costs for all interval buckets: returns cost array where $cost[b] = \bcost(\x, b)$ for $b=[j_1,j_2]$ for all $1 \leq j_1 \leq j_2 \leq n$.}
\label{alg:lenscore}
    \begin{algorithmic}
    \Procedure{AllCosts}{$\db$, $\epsilon_2$} 
    \State $cost \gets []$
    \For {$k=1, \ldots, n-1$}
        \State Update $cost$ with costs from \Call{CostsSizeK}{$\db$, $\epsilon_2$, $k$}
    \EndFor
    \State \Return $cost$
    \EndProcedure  \\

    \Procedure{CostsSizeK}{$\db$, $\epsilon_1$, $\epsilon_2$, $k$}
    \State $cost \gets []$
    \State Let $T$ be an empty binary search tree with root $t_{root}$.
    \State Insert $x_1, \ldots, x_k$ to $T$.
    \State $sum\leftarrow x_1+\ldots+x_k$.
    \For{$j=k,\ldots,n$}
        \State $i \gets j - k + 1$
        \hspace{4em}\text{// interval $[i,j]$ has length $k$}
    	\State $cost[[i, j]] \gets 1/\epsilon_2 + 2 \;\cdot$ \Call{DevTree}{$t_{root}$, $\frac{s}{k}$}
	\State Remove $x_{i}$ from $T$
	\State Insert $x_{j+1}$ into $T$
	\State $sum\leftarrow sum - x_{i} + x_{j+1}$
    \EndFor
    \State \Return $cost$
    \EndProcedure \\
    
    \Procedure{DevTree}{$t_{root}, a$}
    \State // Compute $\sum_{t\in T, x_t \geq a} (x_t-a)$ on a binary search tree T
    \If{$t_{root}$ is NULL} \Return 0\EndIf
    \If{$x_{root} < a$}
    	\State $r\leftarrow$ \Call{DevTree}{$t_{r}$, $a$} 
    \Else
    	\State $r\leftarrow$ \Call{DevTree}{$t_{\ell}$, $a$} - $(c_{r}+1)a$ +  $\Sigma_{r}$  + $x_{root}$
    \EndIf
    \State \Return $r$
    \EndProcedure
    \end{algorithmic}
\end{algorithm}

When compared to computing costs for a v-optimal histogram (which is based on an $L_2$ metric), computing the costs for the $L_1$ metric used in this paper is more complicated because the cost can not be easily decomposed into sum and sum of squares terms.  A naive computation of the cost would require $\Theta(n)$ per bucket.  However, we show that the cost can be computed through simple operations on a balanced binary search tree and thus it is possible to compute the cost in $O(\log n)$ time per bucket.

The computationally challenging part of computing bucket costs (as defined in Def.~\ref{def:l1cost}) is computing the absolute deviation $dev(\x, b_i)$ for all intervals $b_i$.  Recall that
\begin{align*}
dev(\x, b_i)&=2 \sum_{j\in I^{+}}x_j -|I^{+}|\cdot \bmean
\end{align*}

For interval $b_i=[i_1, i_2]$, the total deviation can be computed knowing only $|I^{+}|$, the number of $x_j$ who are larger than $\bmean$, and the sum of $x_j$ for $j \in I^{+}$.

Those quantities can be efficiently computed with a binary search tree of $x_{j_1}, \ldots, x_{j_2}$. For each node $t$ in the binary search tree, record its value ($x_t$).  In addition, each node $t$ stores the sum of all values in its subtree ($\Sigma_t$), and the number of nodes in its subtree ($c_t$).  For any constant $a$ and any binary search tree $T$, we can then compute $\sum_{t\in T, x_t \geq a} (x_t - a)$.  This what the {\sc DevTree} procedure in Algorithm~\ref{alg:lenscore} computes.

The correctness of {\sc DevTree} can be proved inductively on the height of the tree. It is clear that the answer is correct when $T$ is an empty tree. When $T$ is not empty, let $\ell$ and $r$ denote the left and right children, respectively, and let $T_l$ and $T_r$ denote the left and right subtree, respectively. If $x_{root} < a$, all entries in the left subtree are less than $a$ as well. Thus,
\begin{align*}
\sum_{t\in T, x_t\geq a} ( x_t - a ) &= \sum_{t\in T_r, x_t\geq a} ( x_t -a ).
\end{align*}
When $x_{root}\geq a$, all entries in the right subtree are larger than or equal to $a$ as well. Hence,
\begin{align*}
\sum_{t\in T, x_t\geq a}  ( x_t -a ) &= \sum_{t\in T_l, x_t\geq a} ( x_t -a ) +\sum_{t\in T_r} ( x_t -a ) + ( x_{root} -a )\\
&=\sum_{t\in T_l, x_t\geq a} ( x_t -a ) - (|T_r|+1)a +\sum_{t\in T_r} x_t + x_{root} \\
&=\sum_{t\in T_l, x_t\geq a} ( x_t -a ) - (c_r+1)a + \Sigma_{r} + x_{root}
\end{align*}
Since each recursive call in {\sc DevTree} goes one level deeper in to the tree $T$, the complexity of {\sc DevTree} is bounded by the height of $T$.
\begin{lemma}\label{lem:onescore}
{\sc DevTree} takes $O(h(T))$ time, where $h(T)$ is the height of tree $T$.
\end{lemma}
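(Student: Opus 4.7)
The plan is to argue the bound by a direct inspection of the recursive structure of the procedure, bounding both the work per invocation and the maximum recursion depth.

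First I would observe that each invocation of {\sc DevTree} either returns immediately (when $t_{root}$ is NULL) or performs exactly one recursive call: if $x_{root} < a$, it recurses only on the right subtree $t_r$; otherwise, it recurses only on the left subtree $t_\ell$ and combines the result with the locally cached values $c_r$, $\Sigma_r$, and $x_{root}$. In particular, the branches are mutually exclusive, and the algorithm never recurses on both children. Thus the sequence of invocations forms a single root-to-descendant path in $T$, whose length is bounded by the height $h(T)$.

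Next I would bound the per-invocation work. Aside from the recursive call, each invocation performs a constant number of comparisons and arithmetic operations. Crucially, the quantities $c_r$ and $\Sigma_r$ (the size and sum of the right subtree) are stored at each node $t_{root}$, so they are retrieved in $O(1)$ time rather than recomputed. Hence each invocation contributes $O(1)$ work beyond its recursive call.

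Combining these two observations, the total running time is $O(1)$ per level multiplied by at most $h(T)+1$ levels of recursion, giving $O(h(T))$ overall. Since no step requires a delicate amortized argument or potential analysis, the proof is essentially a straightforward induction on $h(T)$: the base case is the NULL tree, and the inductive step uses that recursing into a child reduces the height by at least one. The only genuine prerequisite is the invariant that every node stores the correct $c_t$ and $\Sigma_t$ for its subtree, which is maintained by the insert/remove operations in {\sc CostsSizeK} at an additional $O(h(T))$ cost per update and is not part of the lemma's obligation.
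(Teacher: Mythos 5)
Your proof is correct and follows essentially the same argument as the paper, which simply notes that each recursive call of {\sc DevTree} descends one level in $T$ while doing $O(1)$ additional work with the stored subtree aggregates $c_r$ and $\Sigma_r$. Your version just spells out the single-path recursion structure more explicitly.
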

To compute the $L_1$ costs for all intervals with length $k$, we can dynamically maintain a binary search tree $T$. After the cost for interval $[i,i+k]$ has been computed, we can update the tree to compute interval $[i+1, i+k+1]$ by removing $x_i$ from the tree and adding $x_{i+k+1}$. The detailed process is described in {\sc CostsSizeK} of Algorithm~\ref{alg:lenscore}. Recall that each insert and delete operation on the binary search tree $T$ takes $O(h(T))$ time.
\begin{lemma}\label{lem:lenscore}
{\sc CostsSizeK} takes $O(n\;h(T))$ time.  
\end{lemma}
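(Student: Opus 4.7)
The plan is to decompose the running time of {\sc CostsSizeK} into (i) the one-time initialization cost, (ii) the per-interval cost evaluation via {\sc DevTree}, and (iii) the per-iteration tree update cost, then bound each in terms of $h(T)$ and sum over iterations.

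First I would handle initialization. Before entering the main loop, the algorithm inserts $x_1,\dots,x_k$ into an initially empty binary search tree and computes the prefix sum $x_1+\dots+x_k$. A single insertion on a binary search tree of height $h(T)$ takes $O(h(T))$ time, so the $k$ insertions together take $O(k\,h(T))$ time. Computing the initial $sum$ takes $O(k)$ time. Since $k \le n$, this initialization is $O(n\,h(T))$.

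Next I would analyze one iteration of the main loop, which runs once for each $j\in\{k,\dots,n\}$, i.e.\ at most $n$ times. A single iteration does the following constant number of bookkeeping steps, each bounded by $O(h(T))$: (a) one call to {\sc DevTree}, which by Lemma~\ref{lem:onescore} takes $O(h(T))$ time; (b) a division, multiplication, and addition to form $bcost[[i,j]]$ in $O(1)$ time; (c) one deletion of $x_i$ from $T$, taking $O(h(T))$ time; (d) one insertion of $x_{j+1}$ into $T$, taking $O(h(T))$ time; and (e) the constant-time update to maintain $sum$. Thus each iteration costs $O(h(T))$.

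Summing over the at most $n$ iterations gives $O(n\,h(T))$ for the loop, and together with the $O(n\,h(T))$ initialization this yields the claimed $O(n\,h(T))$ bound. I do not expect any real obstacle here: the only subtlety is ensuring that the auxiliary quantities $\Sigma_t$ and $c_t$ stored at each node can be updated in $O(h(T))$ time during insert and delete, which is standard for augmented balanced binary search trees since only nodes along the insertion/deletion path need to recompute their subtree-sum and subtree-count.
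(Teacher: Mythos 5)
Your proof is correct and follows essentially the same route as the paper: the paper's (very terse) argument is exactly that each of the $O(n)$ iterations costs $O(h(T))$ for the {\sc DevTree} call (Lemma~\ref{lem:onescore}) plus one insertion and one deletion, each $O(h(T))$. Your added remarks about the initialization cost and about maintaining the augmented fields $\Sigma_t$ and $c_t$ along the insertion/deletion path are correct and merely make explicit what the paper leaves implicit.
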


In addition, the height of the binary search tree $T$ is always $O(\log |T|)$ if we implement $T$ with any balanced binary search tree (e.g. AVL tree, Red-black tree). 
We discuss the overall runtime of {\sc AllCosts} in Section~\ref{sec:runtime}.

\subsection{Efficient algorithm for least cost partition} \label{sec:partintervals}

This section describes the {\sc LeastCostPartition} algorithm.  A pseudocode description is shown in Algorithm~\ref{alg:1dhistogram}.

\begin{algorithm}
\caption{Algorithm for choosing least cost partition}
\label{alg:1dhistogram}
    \begin{algorithmic}
    \Procedure{LeastCostPartition}{$\allbuckets, cost$} 
    \State $\B_j\leftarrow \emptyset$, \;$j=1, 2,\ldots, n$
    \State $c_j\leftarrow +\infty$, \;$j=1, 2,\ldots, n$
    \For {$j=1, \ldots, n$}
    	\ForAll {$b=[i, j]\in\allbuckets$}
		\If {$\B_{i-1}\neq\emptyset$ and $c_{i-1}+cost[b]<c_j$}
			\State $c_j \gets c_{i-1}+cost[b]$
			\State $\B_j \gets \{b\}\cup \B_{i-1}$
		\EndIf
	\EndFor
    \EndFor
    \State \Return $\B_n$
    \EndProcedure
    \end{algorithmic}
\end{algorithm}

Given a set of buckets $\allbuckets$, as well as the cost of each bucket $b\in\allbuckets$, an optimal partition $\B$ can be constructed using dynamic programming. The key idea of this algorithm is same with the algorithm of computing the v-optimal histogram in \cite{Jagadish:1998:OHQ:645924.671191}. We incrementally build an optimal partition that covers $[1, j]$ for $j = 1,\dots, n$.  Let $\B_j$ denote an optimal partition over $[x_1, x_j]$.  The last bucket of $\B_j$ is $[i^*, j]$ for some $1 \leq i^* \leq j$.  Then it must be that the remaining buckets of $\B_j$ form an optimal partition for $[1, \ldots, i^* -1]$.  Otherwise, we can come up with a partition on $[1, j]$ with lower cost by combining the optimal partition on $[1, i^* -1]$ and bucket $[i^*, j]$.  

Thus, the optimal partition on $x_1, \ldots, x_j$ can be constructed using $\B_1, \ldots, \B_{j-1}$:
\begin{align*}
\B_j &= \B_{i^*-1} \cup \{[i^*, j]\}, \text{where}\\
i^* &= \operatornamewithlimits{argmin}_i\{cost(\B_{i-1})+cost([i,j])\;|\;[i,j]\in\allbuckets\}.
\end{align*}

In addition, since each bucket $b$ is visited only once by the inner loop of Algorithm~\ref{alg:1dhistogram}, the running time of the algorithm is linear to $n$ and the number of buckets in $\allbuckets$.
\begin{lemma}\label{lem:1dhistogram}
The running time of Algorithm~\ref{alg:1dhistogram} is $O(|\allbuckets|+n)$.
\end{lemma}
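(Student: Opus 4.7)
The plan is to bound the runtime by inspecting each phase of Algorithm~\ref{alg:1dhistogram} separately and exploiting the fact that the two nested loops together enumerate each bucket in $\allbuckets$ exactly once. First, I would account for the initialization phase, which sets $\B_j\gets\emptyset$ and $c_j\gets+\infty$ for $j=1,\dots,n$; this obviously costs $O(n)$ time.

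Next, I would analyze the two nested loops. The outer loop ranges over $j=1,\dots,n$, and for each $j$ the inner loop ranges over buckets $b=[i,j]\in\allbuckets$, i.e., buckets whose right endpoint is exactly $j$. The key observation is that every bucket $b\in\allbuckets$ has a unique right endpoint, so across all iterations of the outer loop the inner loop body is executed exactly $|\allbuckets|$ times in total. If each inner iteration does $O(1)$ work, the combined cost of the nested loops is $O(|\allbuckets|)$, and adding initialization gives the claimed $O(|\allbuckets|+n)$ bound.

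The one step that is not obviously $O(1)$ per inner iteration is the update $\B_j\gets\{b\}\cup\B_{i-1}$, which, taken literally, would copy a partition of size up to $j/1$ buckets and blow up the runtime. This is the main obstacle, and I would handle it exactly as in textbook dynamic-programming implementations of v-optimal histograms: instead of materializing $\B_j$, store only a back-pointer $p_j\gets i$ recording the left endpoint of the last bucket used in the optimal partition of $[1,j]$, together with the scalar $c_j$. Both assignments are $O(1)$, so each inner iteration performs only a constant number of arithmetic comparisons, a pointer write, and a cost update. After the loops terminate, the actual partition $\B_n$ is reconstructed in $O(n)$ time by following the chain of back-pointers from $n$ to $1$, emitting one bucket per hop.

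Summing the three contributions---$O(n)$ for initialization, $O(|\allbuckets|)$ for the nested loops with the back-pointer representation, and $O(n)$ for the final reconstruction---yields total time $O(|\allbuckets|+n)$, as claimed. Correctness of this representation is immediate from the recurrence stated in the algorithm description, since the back-pointer $p_j$ together with $p_{p_j-1},p_{p_{p_j-1}-1},\dots$ deterministically encodes the same partition $\B_j$ that the pseudocode would otherwise store explicitly.
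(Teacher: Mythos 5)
Your proposal is correct and follows the same argument as the paper, whose entire justification is the observation that each bucket in $\allbuckets$ is visited exactly once by the inner loop, giving $O(|\allbuckets|+n)$ overall. You are in fact more careful than the paper: the assignment $\B_j\gets\{b\}\cup\B_{i-1}$ is not $O(1)$ if partitions are materialized as sets, and your back-pointer representation with an $O(n)$ reconstruction at the end is exactly the standard fix needed to make the stated bound hold literally.
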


\subsection{Time complexity of Algorithm~\ref{alg:privpartintervals}} \label{sec:runtime}

Having described and analyzed the subroutines {\sc CostsSizeK} and {\sc LeastCostPartition}, we can now discuss the runtime of the entire algorithm.  First, we consider the case when all possible intervals are calculated.  Thus, $|\allbuckets| = O(n^2)$ and {\sc AllCosts} requires $O(n^2 \log n)$ time.  The runtime of Algorithm~\ref{alg:privpartintervals} is dominated by time spent computing {\sc AllCosts}.

\begin{theorem}\label{thm:runtime}
    Algorithm~\ref{alg:privpartintervals} takes $O(n^2\log n)$ time.
\end{theorem}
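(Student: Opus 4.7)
The plan is to analyze the runtime of Algorithm~\ref{alg:privpartintervals} by decomposing it into its three phases and summing the costs, appealing to the lemmas already established for the subroutines. Since $|\allbuckets| = \binom{n+1}{2} = O(n^2)$ when all intervals are considered, each phase can be bounded separately.

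First, I would handle the call to {\sc AllCosts}. This procedure invokes {\sc CostsSizeK} once for each interval length $k=1,\ldots,n-1$. By Lemma~\ref{lem:lenscore}, a single call to {\sc CostsSizeK} runs in $O(n\, h(T))$ time, where $T$ is the balanced binary search tree maintained across the sliding window over intervals of length $k$. Implementing $T$ as a self-balancing BST (e.g., an AVL tree or red–black tree) guarantees $h(T) = O(\log |T|) = O(\log n)$, so each {\sc CostsSizeK} call costs $O(n \log n)$. Summing over all $n-1$ choices of $k$ yields a total of $O(n^2 \log n)$ for {\sc AllCosts}.

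Next, the noise-injection loop draws one Laplace random variable per bucket and adds it to the stored cost; with $|\allbuckets| = O(n^2)$ this contributes $O(n^2)$ time. Finally, {\sc LeastCostPartition} runs in $O(|\allbuckets| + n) = O(n^2)$ time by Lemma~\ref{lem:1dhistogram}. Adding these three contributions, the overall runtime is dominated by the cost of {\sc AllCosts}, giving the claimed $O(n^2 \log n)$ bound.

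I do not expect any real obstacle here: the theorem is essentially a bookkeeping combination of the subroutine lemmas, and the only subtlety is ensuring that the BST is genuinely balanced so that the $O(\log n)$ bound on $h(T)$ holds uniformly throughout the sliding-window updates. Once that implementation choice is recorded, the rest is straightforward summation.
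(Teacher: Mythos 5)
Your proposal is correct and follows essentially the same route as the paper: bound {\sc AllCosts} by $O(n\log n)$ per interval length via Lemma~\ref{lem:lenscore} and the balanced-tree height bound, sum over the $n-1$ lengths, and note that the noise loop and {\sc LeastCostPartition} (Lemma~\ref{lem:1dhistogram}) contribute only $O(|\allbuckets|+n)=O(n^2)$, so {\sc AllCosts} dominates. No gaps.
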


We can achieve a lower runtime by considering partitions that consist only of intervals whose lengths are powers of 2.  In other words, $\allbuckets = \set{ [i,j] \myvert j-i+1 \text{ is a power of 2}}$ and $|\allbuckets| = O(n \log n)$.  This restriction requires making only a minor modification to the algorithm: the for loop in {\sc AllCosts} now ranges over $k = 1, 2, 4, \ldots, 2^{\lfloor \log n \rfloor }$. 
Constructing such a partition takes much less time.
\begin{theorem}
When Algorithm~\ref{alg:privpartintervals} is restricted to only compute costs for intervals whose length is a power of 2, the algorithm takes $O(n\log^2 n)$ time.
\end{theorem}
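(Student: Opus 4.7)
The plan is to account for the cost of each of the three subroutines of Algorithm~\ref{alg:privpartintervals} under the restriction that $\allbuckets$ contains only intervals whose length is a power of $2$, and then observe that the total is dominated by {\sc AllCosts}.

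First I would bound the size of the restricted bucket set. For each length $k \in \{1, 2, 4, \ldots, 2^{\lfloor \log n \rfloor}\}$ there are at most $n$ intervals of that length contained in $[1,n]$, and there are $O(\log n)$ such lengths, so $|\allbuckets| = O(n \log n)$. This immediately gives us the runtime of the last two subroutines: the noise-addition loop runs in time $O(|\allbuckets|) = O(n\log n)$, and by Lemma~\ref{lem:1dhistogram} the {\sc LeastCostPartition} call finishes in $O(|\allbuckets| + n) = O(n\log n)$.

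Next I would analyze {\sc AllCosts} with the corresponding modification of its outer loop so that $k$ now ranges only over the $O(\log n)$ powers of two up to $n$. Each invocation of {\sc CostsSizeK} maintains a balanced binary search tree of size at most $k \le n$, whose height is therefore $O(\log n)$. By Lemma~\ref{lem:lenscore}, a single call to {\sc CostsSizeK} then takes $O(n \log n)$ time. Summing over the $O(\log n)$ chosen values of $k$ gives $O(n\log^2 n)$ for {\sc AllCosts}.

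Combining the three bounds, the total runtime is $O(n\log^2 n) + O(n\log n) + O(n\log n) = O(n \log^2 n)$, as claimed. I do not anticipate a real obstacle here: the argument is a straightforward bookkeeping exercise that simply re-uses the per-interval cost bound (Lemma~\ref{lem:onescore}) together with balancedness of the search tree, and then sums over the $O(\log n)$ admissible lengths rather than over all $n$ lengths as in the unrestricted case. The only subtlety worth double-checking is that the balanced-tree invariant is preserved by the rolling insert/delete updates inside {\sc CostsSizeK}, which is standard for AVL or red-black trees.
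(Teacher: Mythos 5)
Your proposal is correct and follows essentially the same route as the paper: it bounds $|\allbuckets| = O(n\log n)$ under the power-of-two restriction, charges $O(n\log n)$ per call to {\sc CostsSizeK} via the balanced-tree height bound, and sums over the $O(\log n)$ admissible lengths, with the remaining subroutines absorbed into lower-order terms. No gaps.
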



\section{Running Time}

The following are running times, for $\epsilon=0.1$, for each of the algorithms considered in Sec \ref{sec:experiments}.  All algorithms are quite efficient, with the exception of MWEM on the harder datasets.  \DW runs in a few seconds on all datasets studied.  Its increases with the complexity of the data set, as expected, but only by a few seconds.  For easier datasets, the \DW running time almost matches that of standard mechanisms like Identity.

\begin{table}[h]
\centering
\small
\begin{tabular}{|c|c|c|c|c|c|c|c|c|c|c|}
\hline
Dataset & DAWA & Partition+Laplace & Partition+Geometric & Partition+Uniform & Identity & Privelet & EFPA & P-HP & S. First & MWEM\\
\hline Nettrace & 1.33 & 1.17 & 1.17 & 1.17 & 1.15 & 1.21 & 1.21 & 1.73 & 35.45 & 11.60\\
\hline Adult & 1.31 & 1.17 & 1.17 & 1.17 & 1.14 & 1.21 & 1.21 & 1.69 & 44.56 & 11.55\\
\hline Medical Cost &  1.36 & 1.17 & 1.17 & 1.17 & 1.14 & 1.21 & 1.21 & 1.76 & 41.89 & 11.65\\
\hline Search Logs & 1.91 & 1.18 & 1.21 & 1.22 & 1.14 & 1.20 & 1.21 & 1.90 & 45.40 & 151.1\\
\hline Income & 3.05 & 1.22 & 1.28 & 1.28 & 1.16 & 1.22 & 1.22 & 1.76 & 38.95 & 2982 \\
\hline Patent & 5.23 & 1.22 & 1.36 & 1.36 & 1.14 & 1.20 & 1.20 & 1.89 & 48.03 & 1568 \\
\hline HepPh & 3.80 & 1.22 & 1.28 & 1.28 & 1.14 & 1.21 & 1.21 & 1.90 & 52.92 & 39.98 \\
\hline
\end{tabular}
\caption{Running time (in seconds) for $\epsilon=0.1$}
\end{table}

\section{Proofs}
This section covers all omitted proofs in the paper.
\subsection{Proofs in Section~\ref{sec:partition}}
\thmutilpartition*
\begin{proof}
Let $S_t$ be the set of histograms such that if $H \in S_t$, then $cost(H) \leq OPT + t$.  Let $\bar{S}_t$ be the complement of $S_t$.

Let $Z_B$ denote the Laplace random variable with scale $\lambda = 2 \sens \bcost / \epsilon$ that is added to the cost of bucket $B$ for each $B \in \allbuckets$.  

Consider the event that $|Z_B| < \frac{t}{2n}$ for all $B \in \allbuckets$.  If this happens, then the algorithm will return a solution in $S_t$.  There are at most $n$ buckets in any histogram, so after noise is added to the individual bucket costs, the cost of the histogram can change by an amount which must be less than $n \frac{t}{2n} = \frac{t}{2}$.  Thus, any histogram in $\bar{S}_t$ will still have a higher cost than the optimal histogram and thus not be selected.

Let $\alg$ denote the algorithm.
\begin{align*}
    &P \left( \alg(\db) \in S_t \right) \\
    &\geq P \left( |Z_B| < \frac{t}{2n} \text{ for all } B \in \allbuckets \right) \\
    &= 1 - P \left( \exists B \in \allbuckets \text{ such that } |Z_B| \geq \frac{t}{2n} \right) \\
    &\geq 1 - |\allbuckets| P \left( |Z_B| \geq \frac{t}{2n} \right) & \text{(union bound)} \\
    &= 1 - |\allbuckets| \exp\left( - \frac{t}{\lambda 2n} \right) & |Z_B| \sim \text{Exp}(1/\lambda) \\
    &\geq 1 - \delta
\end{align*}
when $t \geq \lambda \; 2n \log(|\allbuckets|/\delta) = \frac{4 \sens c \; n \log(|\allbuckets|/\delta)}{\epsilon}$.
\end{proof}

\subsection{Proofs in Section~\ref{sec:stats}}\label{sec:app:proof4}
\propqueryconvert*
\begin{proof}
Given a vector of statistics $\s = s_1 \ldots s_k$ for the corresponding buckets $B$, 
the estimate for the data vector $\x$ is constructed by uniform expansion $\estx = \unexp(\B, \s)$:
$$\hat{x}_i = \frac{s_j}{|b_j|},\quad i\in b_j.$$
Given a query $q=(q_1, \ldots, q_n)$ on $\x$, an estimated answer to $q$, $q(\hat\x)$, is computed as
\begin{equation}\label{eqn:queryconvert}
q(\hat{\x}) = \sum_{j=1}^k\sum_{i\in b_j} q_i\frac{s_j}{|b_j|} = \sum_{j=1}^k\left(\sum_{i\in b_j}\frac{q_i}{|b_j|}\right)s_j.
\end{equation}
\end{proof}

\thmgreedycomplex*
\begin{proof}
Recall $[i,j]$ is used to denote the interval that corresponding to $q$. For each iteration of the loop in Algorithm~\ref{alg:greedyhier}, we would like to prove that it takes $O(m(j-i+1) + (j-i+1)^2)$ to finish the preprocessing and $O(1)$ to compute Eqn.~(\ref{eqn:decaypartialmse}).

Notice that all queries in $Y_{q_1}$ and $Y_{q_2}$ is contained in range $[i,j]$, and query $q$ is the sum of all entries in range $[i, j]$. Since $Y_q$ is a union of $Y_{q_1}$, $Y_{q_2}$, and the query $q$ itself, the matrix $\Y_q$ must be in the following form:
\begin{align*}
\Y_q =&\left[\begin{array}{ccccccccc} 
0 & \ldots & 0 & 1 & \ldots & 1 & 0 & \ldots & 0 \\
0 & \ldots & 0 & \multicolumn{3}{c}{\Y_{q_1}} & 0 & \ldots & 0 \\
0 & \ldots & 0 & \multicolumn{3}{c}{\Y_{q_2}} & 0 & \ldots & 0 
\end{array}\right].
\end{align*}

To compute $(\Y_q^T\D_q^2\Y_q)^{-1}$ efficiently, the Woodbury formula \cite{woodbury50invert},  is needed
\begin{equation}\label{eqn:woodbury}
(\A + \mathbf{UCV})^{-1} = \A^{-1} - \A^{-1}\mathbf{U}(\mathbf{C}^{-1} + \mathbf{V}\A^{-1}\mathbf{U})^{-1}\mathbf{V}\A^{-1}.
\end{equation}
Since queries in $Y_{q_1}$ and $Y_{q_2}$ don't overlap, for any diagonal matrix $\D$, $\Y_{q_1}^T\D\Y_{q_2} = \Y_{q_2}^T\D\Y_{q_1}=0$. Therefore,
\begin{align*}
&\Y_q^T\D_q^2\Y_q\\
=&\Y_q^T\left[\begin{smallmatrix} \lambda_q^2 & 0 & 0 \\ 0 & (1-\lambda_q)^2\D^2_{q_1} & 0\\ 0 & 0 & (1-\lambda_q)^2\D^2_{q_2} \end{smallmatrix}\right]\Y_q\\
=&\left[\begin{array}{ccc} 
1 &  \multirow{3}{*}{$\Y_{q_1}^T$} & \multirow{3}{*}{$\Y_{q_2}^T$} \\
\vdots \\
1
\end{array}\right]
\left[\begin{smallmatrix} \lambda_q^2 \\ & (1-\lambda_q)^2\D^2_{q_1} \\& & (1-\lambda_q)^2\D^2_{q_2} \end{smallmatrix}\right]
\left[\begin{array}{ccc} 
1 & \ldots & 1 \\
\multicolumn{3}{c}{\Y_{q_1}}\\
\multicolumn{3}{c}{\Y_{q_2}}
\end{array}\right]\\
=&(1-\lambda_q)^2\left[\begin{smallmatrix} \Y_{q_1}^T\D^2_{q_1}\Y_{q_1} & \Y_{q_1}^T\D_{q_1}\Y^2_{q_2}\\ 
\Y_{q_2}^T\D_{q_1}\Y^2_{q_1} & \Y_{q_2}^T\D_{q_1}\Y^2_{q_2}\end{smallmatrix}\right]
+\lambda_q^2\left[\begin{smallmatrix} 1&\ldots &1 \\
\vdots & \ddots & \vdots\\
1&\ldots &1
\end{smallmatrix}\right],\\
=&(1-\lambda_q)^2\left[\begin{smallmatrix} \Y_{q_1}^T\D^2_{q_1}\Y_{q_1} & 0\\ 
0 & \Y_{q_2}^T\D_{q_1}\Y^2_{q_2}\end{smallmatrix}\right]
+\lambda_q^2\left[\begin{smallmatrix} 1&\ldots &1 \\
\vdots & \ddots & \vdots\\
1&\ldots &1
\end{smallmatrix}\right].
\end{align*}
Let
\begin{align*}
\A &= (1-\lambda_q)^2\left[\begin{smallmatrix} \Y_{q_1}^T\D^2_{q_1}\Y_{q_1} & 0\\ 
0 & \Y_{q_2}^T\D_{q_1}\Y^2_{q_2}\end{smallmatrix}\right]\\
\mathbf{V} &= \mathbf{U}^T = [1, \ldots, 1] \\
\mathbf{C} &= \lambda_q^2.
\end{align*}
Then $\Y_q^T\D_q^2\Y_q=\A + \mathbf{UCV}$ can be computed using Eqn.~(\ref{eqn:woodbury}):
\begin{align}
&(\Y_q^T\D_q^2\Y_q)^{-1}\nonumber\\
=&(\A + \mathbf{UCV})^{-1} \nonumber\\
=&\A^{-1} - \A^{-1}\mathbf{U}(\mathbf{C}^{-1} + \mathbf{V}\A^{-1}\mathbf{U})^{-1}\mathbf{V}\A^{-1} \nonumber\\
=&(1-\lambda_q)^{-2}
\left[\begin{smallmatrix} \Y_{q_1}^T\D^2_{q_1}\Y_{q_1} & 0\\ 
0 & \Y_{q_2}^T\D_{q_1}\Y^2_{q_2}\end{smallmatrix}\right]^{-1} \nonumber \\
&-(1-\lambda_q)^{-4}\left[\begin{smallmatrix} \Y_{q_1}^T\D^2_{q_1}\Y_{q_1} & 0\\ 
0 & \Y_{q_2}^T\D_{q_1}\Y^2_{q_2}\end{smallmatrix}\right]^{-1}
\left[\begin{smallmatrix}1 \\ \vdots \\ 1\end{smallmatrix}\right]\left(\lambda_q^{-2}+ [1, \ldots, 1]
\left[\begin{smallmatrix} \Y_{q_1}^T\D^2_{q_1}\Y_{q_1} & 0\\ 
0 & \Y_{q_2}^T\D_{q_1}\Y^2_{q_2}\end{smallmatrix}\right]^{-1}
\left[\begin{smallmatrix}1 \\ \vdots \\ 1\end{smallmatrix}\right]
\right)^{-1} [1, \ldots, 1]\left[\begin{smallmatrix} \Y_{q_1}^T\D^2_{q_1}\Y_{q_1} & 0\\ 
0 & \Y_{q_2}^T\D_{q_1}\Y^2_{q_2}\end{smallmatrix}\right]^{-1}\label{eqn:fastinv}
\end{align}
To simplify Eqn.~\ref{eqn:fastinv}, let
\begin{align}
\v_{q_1} &= (\Y_{q_1}^T\D^2_{q_1}\Y_{q_1})^{-1}\left[\begin{smallmatrix}1 \\ \vdots \\ 1\end{smallmatrix}\right],\label{eqn:compv1}\\
\v_{q_2} &= (\Y_{q_2}^T\D^2_{q_2}\Y_{q_2})^{-1}\left[\begin{smallmatrix}1 \\ \vdots \\ 1\end{smallmatrix}\right],\label{eqn:compv2}\\
m_{q_1} &= [1,\ldots,1](\Y_{q_1}^T\D^2_{q_1}\Y_{q_1})^{-1}\left[\begin{smallmatrix}1 \\ \vdots \\ 1\end{smallmatrix}\right],\label{eqn:compm1}\\
m_{q_2} &= [1,\ldots,1](\Y_{q_2}^T\D^2_{q_2}\Y_{q_2})^{-1}\left[\begin{smallmatrix}1 \\ \vdots \\ 1\end{smallmatrix}\right].\label{eqn:compm2}
\end{align}
and Eqn.~\ref{eqn:fastinv} becomes
\begin{align}
(\mbox{\ref{eqn:fastinv}}) = (1-\lambda_q)^{-2}
\left[\begin{smallmatrix} (\Y_{q_1}^T\D^2_{q_1}\Y_{q_1})^{-1} & 0\\ 
0 & (\Y_{q_2}^T\D_{q_1}\Y^2_{q_2})^{-1}\end{smallmatrix}\right] -(1-\lambda_q)^{-4}(\lambda_q^{-2} + m_{q_1} + m_{q_2})^{-1}\left[\begin{smallmatrix}\v_{q_1}\\ \v_{q_2}\end{smallmatrix}\right]
[\v_{q_1}^T, \v_{q_2}^T]\label{eqn:fastinvsimple}.
\end{align}
Recall $\tWW_q = [\tWW_{q_1},\tWW_{q_2}]$, according to Eqn.~(\ref{eqn:fastinvsimple}), we have
\begin{align}
&tr({\tWW_q}^T\tWW_q(\Y^T_q\D^2_{q}\Y_q)^{-1})\nonumber\\
=&(1-\lambda_q)^{-2}tr\left({\tWW_q}^T\tWW_q\left[\begin{smallmatrix} (\Y_{q_1}^T\D^2_{q_1}\Y_{q_1})^{-1} & 0\\ 
0 & (\Y_{q_2}^T\D_{q_1}\Y^2_{q_2})^{-1}\end{smallmatrix}\right]\right)
- (1-\lambda_q)^{-4}(\lambda_q^{-2} + m_{q_1} + m_{q_2})^{-1}tr({\tWW_q}^T\tWW_q\left[\begin{smallmatrix}\v_{q_1}\\ \v_{q_2}\end{smallmatrix}\right]
[\v_{q_1}^T, \v_{q_2}^T])\nonumber\\
=&(1-\lambda_q)^{-2}tr\left(
\left[\begin{smallmatrix}\tWW_{q_1}^T\\ \tWW_{q_2}^T\end{smallmatrix}\right]
[\tWW_{q_1}, \tWW_{q_2}]\left[\begin{smallmatrix} (\Y_{q_1}^T\D^2_{q_1}\Y_{q_1})^{-1} & 0\\ 
0 & (\Y_{q_2}^T\D_{q_1}\Y^2_{q_2})^{-1}\end{smallmatrix}\right]\right)\nonumber\\
&-(1-\lambda_q)^{-4}(\lambda_q^{-2} + m_{q_1} + m_{q_2})^{-1}tr([\v_{q_1}^T, \v_{q_2}^T]
\left[\begin{smallmatrix}\tWW_{q_1}^T\\ \tWW_{q_2}^T\end{smallmatrix}\right]
[\tWW_{q_1}, \tWW_{q_2}]\left[\begin{smallmatrix}\v_{q_1}\\ \v_{q_2}\end{smallmatrix}\right]
)\nonumber\\
=&(1-\lambda_q)^{-2}tr\left(
\left[\begin{smallmatrix}\tWW_{q_1}^T\tWW_{q_1} & \tWW_{q_1}^T\tWW_{q_2} \\ 
\tWW_{q_2}^T\tWW_{q_1} & \tWW_{q_2}^T\tWW_{q_2}\end{smallmatrix}\right]
\left[\begin{smallmatrix} (\Y_{q_1}^T\D^2_{q_1}\Y_{q_1})^{-1} & 0\\ 
0 & (\Y_{q_2}^T\D_{q_1}\Y^2_{q_2})^{-1}\end{smallmatrix}\right]\right)\nonumber\\
&-(1-\lambda_q)^{-4}(\lambda_q^{-2} + m_{q_1} + m_{q_2})^{-1}
tr((\v_{q_1}^T\tWW_{q_1}^T+ \v_{q_2}^T\tWW_{q_2}^T)
(\tWW_{q_1}\v_{q_1}+ \tWW_{q_2}\v_{q_2}))\nonumber\\
=&(1-\lambda_q)^{-2}tr({\tWW_{q_1}}^T\tWW_{q_1}(\Y^T_{q_1}\D^2_{q_1}\Y_{q_1})^{-1})
+(1-\lambda_q)^{-2}tr({\tWW_{q_2}}^T\tWW_{q_2}(\Y^T_{q_2}\D^2_{q_2}\Y_{q_2})^{-1})\nonumber\\
&-(1-\lambda_q)^{-4}(\lambda_q^{-2} + m_{q_1} + m_{q_2})^{-1}||\tWW_{q_1}\v_{q_1}+ \tWW_{q_2}\v_{q_2}||_2^2
\label{eqn:partialcomp1},
\end{align}
and
\begin{align}
&tr\left(\left[\begin{smallmatrix}{\tWW_{q_1}}^T\tWW_{q_1} & 0\\0 & {\tWW_{q_2}}^T\tWW_{q_2}\end{smallmatrix}\right](\Y^T_q\D^2_{q}\Y_q)^{-1}\right)\nonumber\\
=&(1-\lambda_q)^{-2}tr\left(\left[\begin{smallmatrix}{\tWW_{q_1}}^T\tWW_{q_1} & 0\\0 & {\tWW_{q_2}}^T\tWW_{q_2}\end{smallmatrix}\right]\left[\begin{smallmatrix} (\Y_{q_1}^T\D^2_{q_1}\Y_{q_1})^{-1} & 0\\ 
0 & (\Y_{q_2}^T\D_{q_1}\Y^2_{q_2})^{-1}\end{smallmatrix}\right]\right)\nonumber\\
&-(1-\lambda_q)^{-4}(\lambda_q^{-2} + m_{q_1} + m_{q_2})^{-1}
tr\left(
\left[\begin{smallmatrix}{\tWW_{q_1}}^T\tWW_{q_1} & 0\\ 0 & {\tWW_{q_2}}^T\tWW_{q_2}\end{smallmatrix}\right]
\left[\begin{smallmatrix}\v_{q_1}\\ \v_{q_2}\end{smallmatrix}\right]
[\v_{q_1}^T, \v_{q_2}^T]\right)\nonumber\\
=&(1-\lambda_q)^{-2}tr({\tWW_{q_1}}^T\tWW_{q_1}(\Y^T_{q_1}\D^2_{q_1}\Y_{q_1})^{-1})
+(1-\lambda_q)^{-2}tr({\tWW_{q_2}}^T\tWW_{q_2}(\Y^T_{q_2}\D^2_{q_2}\Y_{q_2})^{-1})\nonumber\\
&-(1-\lambda_q)^{-4}(\lambda_q^{-2} + m_{q_1} + m_{q_2})^{-1}(||\tWW_{q_1}\v_{q_1}||_2^2+ ||\tWW_{q_2}\v_{q_2}||_2^2)
\label{eqn:partialcomp2}.
\end{align}
Eqn~(\ref{eqn:decaypartialmse}) can be computed by putting Eqn.~(\ref{eqn:partialcomp1}) and (\ref{eqn:partialcomp2}) together:
\begin{align}
&tr\left(\left(t^{-\frac{l}{2}}{\tWW_q}^T\tWW_q + (1-t^{-\frac{l}{2}})\left[\begin{smallmatrix}{\tWW_{q_1}}^T\tWW_{q_1} & 0\\ 0 & {\tWW_{q_2}}^T\tWW_{q_2}\end{smallmatrix}\right]\right)(\Y^T_q\D^2_{q}\Y_q)^{-1}\right)\nonumber\\
=&t^{-\frac{l}{2}}tr({\tWW_q}^T\tWW_q(\Y^T_q\D^2_{q}\Y_q)^{-1})
+ (1-t^{-\frac{l}{2}})tr\left(\left[\begin{smallmatrix}{\tWW_{q_1}}^T\tWW_{q_1} & 0\\ 0 & {\tWW_{q_2}}^T\tWW_{q_2}\end{smallmatrix}\right](\Y^T_q\D^2_{q}\Y_q)^{-1}\right)\nonumber\\
=&(1-\lambda_q)^{-2}tr({\tWW_{q_1}}^T\tWW_{q_1}(\Y^T_{q_1}\D^2_{q_1}\Y_{q_1})^{-1})
+(1-\lambda_q)^{-2}tr({\tWW_{q_2}}^T\tWW_{q_2}(\Y^T_{q_2}\D^2_{q_2}\Y_{q_2})^{-1})\nonumber\\
&-(1-\lambda_q)^{-4}(\lambda_q^{-2} + m_{q_1} + m_{q_2})^{-1}
(t^{\frac{l}{2}}||\tWW_{q_1}\v_{q_1}+ \tWW_{q_2}\v_{q_2}||_2^2 + (1-t^{\frac{l}{2}}(||\tWW_{q_1}\v_{q_1}||_2^2+ ||\tWW_{q_2}\v_{q_2}||_2^2))\label{eqn:fastdecaypartialmse}
\end{align}
Observe Eqn.~(\ref{eqn:fastdecaypartialmse}), if we can precompute the following quantities:
\begin{align*}
tr({\tWW_{q_1}}^T\tWW_{q_1}(\Y^T_{q_1}\D^2_{q_1}\Y_{q_1})^{-1}), \:
tr({\tWW_{q_2}}^T\tWW_{q_2}(\Y^T_{q_2}\D^2_{q_2}\Y_{q_2})^{-1}), \:
||\tWW_{q_1}\v_{q_1}+ \tWW_{q_2}\v_{q_2}||_2^2, \:
||\tWW_{q_1}\v_{q_1}||_2^2+ ||\tWW_{q_2}\v_{q_2}||_2^2, \:
m_{q_1}, \: m_{q_2},
\end{align*}
for any given $\lambda_q$, Eqn.~(\ref{eqn:decaypartialmse}) can be computed in $O(1)$ time using Eqn.~(\ref{eqn:fastdecaypartialmse}). 

Since $(\Y^T_{q_1}\D^2_{q_1}\Y_{q_1})^{-1})$, $(\Y^T_{q_1}\D^2_{q_1}\Y_{q_1})^{-1})$, $tr({\tWW_{q_1}}^T\tWW_{q_1}(\Y^T_{q_1}\D^2_{q_1}\Y_{q_1})^{-1})$ and $tr({\tWW_{q_2}}^T\tWW_{q_2}(\Y^T_{q_2}\D^2_{q_2}\Y_{q_2})^{-1})$ have already computed when finding the scaling of $q_1$ and $q_2$, we can get them without extra computation.  Further, $\v_1$ and $\v_2$ can be computed using Eqn.~(\ref{eqn:compv1}) and (\ref{eqn:compv2}), respectively, each of which takes $O((j-i+1)^2)$ time; $\tWW_{q_1}\v_{q_1}$ and $\tWW_{q_2}\v_{q_2}$ can then be computed in $O(m(j-i+1))$ time; $m_1$ and $m_2$ can be computed using Eqn.~(\ref{eqn:compm1}) and (\ref{eqn:compm2}), respectively, each of which takes $O((j-i+1)^2)$ time. 

Above all, by keeping $(\Y^T_{q_1}\D^2_{q_1}\Y_{q_1})^{-1})$, $(\Y^T_{q_1}\D^2_{q_1}\Y_{q_1})^{-1})$, $tr({\tWW_{q_1}}^T\tWW_{q_1}(\Y^T_{q_1}\D^2_{q_1}\Y_{q_1})^{-1})$ and $tr({\tWW_{q_2}}^T\tWW_{q_2}(\Y^T_{q_2}\D^2_{q_2}\Y_{q_2})^{-1})$ from the previous step, we show that it only takes $O(m(j-i+1)+(j-i+1)^2)$ time to do the precomputation in each iteration of the loop of Algorithm~\ref{alg:greedyhier} and the rest of computations can be done in $O(1)$ time. Update scalings takes $O(j-i+1)$ time, which is dominated by the pre-computation time.

The intermediate results in Algorithm~\ref{alg:greedyhier} can also accelerate the least square process. Since we have already computed $(\Y^T\D_Y^2\Y)^{-1}$ in the loop of Algorithm~\ref{alg:greedyhier}, applying the ordinary least square method (the last step of Algorithm~\ref{alg:greedyhier}) only takes $O(k^2)$ time instead of $O(k^3)$ time in general cases.  Summing the costs together proves the theorem.
\end{proof}

\end{document}